\newenvironment{breakablealgorithm}
  {
   \begin{center}
     \refstepcounter{algorithm}
     \hrule height.8pt depth0pt \kern2pt
     \renewcommand{\caption}[2][\relax]{
       {\raggedright\textbf{\ALG@name~\thealgorithm} ##2\par}%
       \ifx\relax##1\relax 
         \addcontentsline{loa}{algorithm}{\protect\numberline{\thealgorithm}##2}%
       \else 
         \addcontentsline{loa}{algorithm}{\protect\numberline{\thealgorithm}##1}%
       \fi
       \kern2pt\hrule\kern2pt
     }
  }{
     \kern2pt\hrule height.8pt depth0pt \kern2pt\relax
   \end{center}
  }
    \newcommand*{\algrule}[1][\algorithmicindent]{\makebox[#1][l]{\hspace*{.5em}\thealgruleextra\vrule height \thealgruleheight depth \thealgruledepth}}%
\newcommand*{\thealgruleextra}{}
\newcommand*{\thealgruleheight}{.75\baselineskip}
\newcommand*{\thealgruledepth}{.25\baselineskip}
\def\ALG@printindent{%
    \ifnum \theALG@nested>0
        \ifx\ALG@text\ALG@x@notext
        \else
            \unskip
            \addvspace{-1pt}
            \ALG@printindent@tempcnta=1
            \loop
                \algrule[\csname ALG@ind@\the\ALG@printindent@tempcnta\endcsname]%
                \advance \ALG@printindent@tempcnta 1
            \ifnum \ALG@printindent@tempcnta<\numexpr\theALG@nested+1\relax
            \repeat
        \fi
    \fi
    }%
\patchcmd{\ALG@doentity}{\noindent\hskip\ALG@tlm}{\ALG@printindent}{}{\errmessage{failed to patch}}
\newbox\statebox
\newcommand{\myState}[1]{%
    \setbox\statebox=\vbox{#1}%
    \edef\thealgruleheight{\dimexpr \the\ht\statebox+1pt\relax}%
    \edef\thealgruledepth{\dimexpr \the\dp\statebox+1pt\relax}%
    \ifdim\thealgruleheight<.75\baselineskip
        \def\thealgruleheight{\dimexpr .75\baselineskip+1pt\relax}%
    \fi
    \ifdim\thealgruledepth<.25\baselineskip
        \def\thealgruledepth{\dimexpr .25\baselineskip+1pt\relax}%
    \fi
    \State #1%
    \def\thealgruleheight{\dimexpr .75\baselineskip+1pt\relax}%
    \def\thealgruledepth{\dimexpr .25\baselineskip+1pt\relax}%
}
\newtheorem{definition}{Definition}
\newtheorem{theorem}{Theorem}
\newtheorem{lemma}{Lemma}
\def\tsc#1{\csdef{#1}{\textsc{\lowercase{#1}}\xspace}}
\begin{document}

\let\WriteBookmarks\relax
\def\floatpagepagefraction{1}
\def\textpagefraction{.001}



\title[mode = title]{On-the-fly Unfolding with Optimal Exploration for Linear Temporal Logic Model Checking of Concurrent Software and Systems}

\author[inst1]{Shuo Li}[orcid=0000-0002-1984-6271]
\ead{lishuo20062005@126.com}
\credit{Methodology, Formal analysis, Writing - Original Draft}

\author[inst1]{Li'ao Zheng}
\ead{zlailmm@163.com}
\credit{Software}

\author[inst2]{Ru Yang}[orcid=0000-0001-7879-681X]
\ead{yangru@shnu.edu.cn}
\credit{Writing - Review \& Editing}

\author[inst1]{Zhijun Ding}[orcid=0000-0003-2178-6201]
\ead{dingzj@tongji.edu.cn}
\cormark[1]
\cortext[1]{Corresponding author}
\credit{Conceptualization, Writing - Review \& Editing}

\affiliation[inst1]{organization={Tongji University},
    city={Shanghai},
    postcode={201804},
    country={China}
}

\affiliation[inst2]{organization={Shanghai Normal University},
    city={Shanghai},
    postcode={200234},
    country={China}
}

\begin{abstract}
\noindent\textbf{Context}: Linear temporal logic (LTL) model checking faces a significant challenge known as the state-explosion problem. The on-the-fly method is a solution that constructs and checks the state space simultaneously, avoiding generating all states in advance. But it is not effective for concurrent interleaving. Unfolding based on Petri nets is a succinct structure covering all states that can mitigate this problem caused by concurrency. Many state-of-the-art methods optimally explore a complete unfolding structure using a tree-like structure. However, it is difficult to apply such a tree-like structure directly to the traditional on-the-fly method of LTL. At the same time, constructing a complete unfolding structure in advance and then checking LTL is also wasteful. Thus, the existing optimal exploration methods are not applicable to the on-the-fly unfolding.

\noindent\textbf{Objective}: To solve these challenges, we propose an LTL model-checking method called on-the-fly unfolding with optimal exploration. This method is based on program dependence net (PDNet) proposed in the previous work. 

\noindent\textbf{Method}: Firstly, we define conflict transitions of PDNet and an exploration tree with a novel notion of delayed transitions, which differs from the existing tree-like structure. The tree improves the on-the-fly unfolding by exploring each partial-order run only once and avoiding enumerating all possible combinations. Then, we propose an on-the-fly unfolding algorithm that simultaneously constructs the exploration tree and generates the unfolding structure while checking LTL. 

\noindent\textbf{Results}: We implement a tool for concurrent programs. It also improves traditional unfolding generations and performs better than \textit{SPIN} and \textit{DiVine} on the used benchmarks.

\noindent\textbf{Conclusion}: The core contribution of this paper is that we propose an on-the-fly unfolding with an optimal exploration method for LTL to avoid the enumeration of possible concurrent combinations. This novel method outperforms traditional unfolding generation methods and the existing tools.
\end{abstract}

\begin{keywords}
On-the-fly \sep Unfolding \sep Petri nets \sep Model Checking \sep Linear Temporal Logic 
\end{keywords}

\maketitle

\section{Introduction}\label{Sec:Intr}
 
Model checking of concurrent software and systems is a very active topic, but it is still challenging due to the existence of state explosion. When verifying linear temporal logic (LTL), the on-the-fly method constructs and checks the state space simultaneously, rather than building all states first and then checking them. This makes the process more efficient as it only explores the parts of the state space that are necessary for verifying the property. 
However, due to the large scale of concurrent interleaving, many equivalent but unequal paths quickly aggravate the state-explosion problem. Although the generation of all states is avoided, the on-the-fly method is not effective for concurrent interleaving. 
 
Unfolding can mitigate the state-explosion problem based on the true-concurrent nature of Petri nets (PNs) \cite{McMillan1993Using}. The reason is that an unfolding is a succinct structure covering all states. It represents the partial-order semantics of PNs and can be exponentially more concise than a naive reachability graph. 
Existing unfolding-based methods \cite{Wallner1998Model,Esparza2001Implementing,Khomenko2004Parallel} check the property by unfolding a synchronization of PNs and the Büchi representation of LTL. They are inspired by the automata-theoretic approach \cite{Ding2023EnPAC}, called on-the-fly unfolding in this paper. 
However, the traditional unfolding generation \cite{Khomenko2004Parallel,Rodrguez2013An} is still inherently stateful. Once a new event is extended to the unfolding structure, all possible concurrent combinations are enumerated. This makes them need to solve an NP-complete problem \cite{Rodriguez2015Unfolding}, which seriously limits the performance of the existing unfolding generations as they grow.

As far as we know, state-of-the-art methods \cite{Rodriguez2015Unfolding,Coti2021Quasi,Schemmel2020Symbolic} optimally explore a complete unfolding structure by using a tree-like structure to verify simple safety properties of concurrent programs, such as deadlock, assertion violation, and data race.
Intuitively, it needs complete information, ensuring all partial-order runs are explored only once. A partial-order run represents a set of executions without enforcing the order on independent actions \cite{Coti2021Quasi}. And pairs of independent actions can executed in any order to produce the same state.
In existing optimal exploration methods \cite{Rodriguez2015Unfolding}, the unfolding structure is generated in advance and contains complete information on event relations.
However, the on-the-fly unfolding method generates and checks simultaneously to verify LTL, which does not contain complete information on event relations.
Thus, although it can effectively avoid enumeration combinations, it is still difficult to directly apply the existing optimal exploration using tree-like structures on a PN for LTL.
At the same time, because some parts of the state space are not necessary for verifying the property, constructing a complete unfolding structure in advance \cite{Dietsch2021Verification,Xiang2023Checking} and then checking the property is also wasteful for LTL. Thus, such optimal exploration methods are not applicable to the traditional on-the-fly method of LTL. 



In order to solve the above problems, this paper proposes an LTL model-checking method, called on-the-fly unfolding with optimal exploration, to verify the LTL properties of concurrent programs. 
The core of this method is an exploration tree which is adapted to the on-the-fly unfolding in this paper. 
As a binary tree, it consists of all partial-order runs merged with the same prefix, ensuring that each partial-order run is explored only once. When constructing an exploration tree in the existing optimal exploration methods \cite{Rodriguez2015Unfolding,Coti2021Quasi}, the child nodes of an exploration node are derived from conflict and causal relations. Since the on-the-fly unfolding of LTL does not have complete information on conflict and causal relations of events, the node in our exploration tree cannot be defined in the existing methods \cite{Rodriguez2015Unfolding,Coti2021Quasi}. 
Firstly, the concept of the disabled set from the existing methods \cite{Rodriguez2015Unfolding,Coti2021Quasi} records events that cannot occur in subsequent nodes, ensuring that a partial-order run is explored only once. 
However, if this method is applied directly to a PN, it cannot ensure the completeness of partial-order runs in the on-the-fly unfolding. This is because disabling a transition in a PN prevents the partial-order run of the later-fired transition from being explored. 
Then, generating the unfolding structure in advance is impractical in the case of LTL.

In fact, causal relations are easily captured by the order of transition occurrence, and conflict relations are defined on the transition relations in PNs. Consequently, we put forth a novel notion of the delayed set of conflict transitions in our exploration tree. 
And then, we propose an on-the-fly unfolding algorithm for constructing the exploration tree, whereby the unfolding and checking of the counterexample are carried out simultaneously. This differs from the existing exploration algorithm \cite{Rodriguez2015Unfolding,Coti2021Quasi}. 
We define the explore function to update the delayed set on the fly. 
The exploration tree guides on-the-fly unfolding to extend a new event after the addition of new nodes without enumerating all possible combinations. 
This allows for the verification of the existence of a counterexample of LTL during on-the-fly unfolding.

However, it is challenging to identify conflict relations without defining independence from concurrent programs. In our previous work, we proposed a kind of Petri net named PDNet (program dependence net) for concurrent programs \cite{2023arXiv2301Ding,Li2014Change}, which is ideally suited to describe the dependencies between actions of concurrent programs. Thus, the conflict relations based on the dependence (as a complementary of independence) of concurrent program semantics can be represented on PDNet. Thus, we propose a formal definition of conflict transition based on PDNet.

The PDNet representations for concurrent programs and the negation of LTL can be synchronized, and the LTL can be checked while unfolding a synchronization with an on-the-fly way based on our exploration tree. 
Then, the main contributions are as follows.

1. The traditional conflict relations between transitions in PNs are insufficient to cover all the conflict actions in PDNet-modeled concurrent programs. Based on dependencies from PDNet, we propose a formal definition of conflict transition to identify conflict relations. 

2. The concept of the disabled set from the existing methods \cite{Rodriguez2015Unfolding,Coti2021Quasi}, if applied to a Petri net, cannot ensure the entire partial-order runs. Based on conflict transitions, we propose an exploration tree with a new notion of delayed transition in each node, which helps optimally explore all partial-order runs and avoids enumerating all possible combinations in generating an unfolding structure.

3. We propose an on-the-fly unfolding algorithm by constructing an exploration tree and unfolding the synchronization simultaneously while checking whether a counterexample exists. Our exploration tree is constructed based on an unfolding structure, with the difficulty of updating the delayed set in new nodes using the on-the-fly unfolding. 

4. We implement an automatic tool \textit{PUPER} (PDNet Unfolding and Partial-order explorER) to verify LTL-$_\mathcal{X}$ for concurrent programs.

The next section illustrates with a motivating example. Section \ref{Sec:PDNet} introduces PDNet. Our on-the-fly unfolding with optimal exploration method is proposed in Section \ref{Sec:Unf}. Section \ref{Sec:Exp} describes the experimental evaluation. Related works are reviewed in Section \ref{Sec:Rel}. Finally, we conclude this paper.

\section{Motivating Example}\label{Sec:Mot}

\begin{figure*}[!ht]\centering
\includegraphics[width=0.95\textwidth]{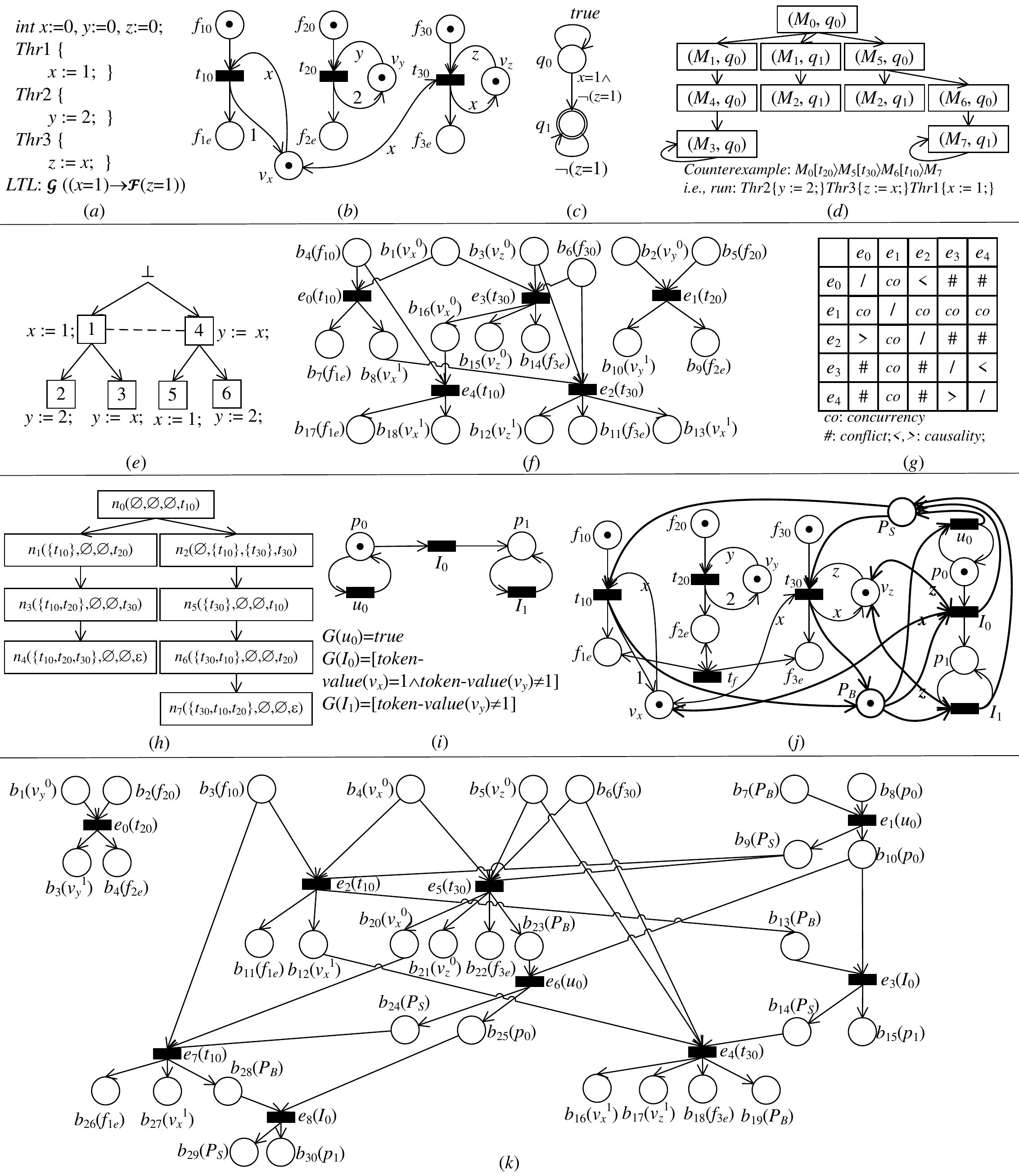}
\caption{Motivating Example. ($a$) An Example Concurrent program $\mathcal{P}$ and LTL formula ($b$) PDNet ($c$) Büchi Automaton ($d$) Model checking ($e$) The Event Structure for the example program ($f$) PDNet Unfolding ($g$) Event Relations ($h$) Exploration Tree ($i$) The Büchi PDNet ($j$) PDNet Synchronization ($k$) Unfolding of PDNet Synchronization.}\label{Fg:Motivation}
\end{figure*}

Consider the concurrent program $\mathcal{P}$ with three POSIX threads shown in Figure \ref{Fg:Motivation}($a$). There is a formula $\psi$ expressed by $\mathcal{G}(x$$=$$1)$$\rightarrow$$(\mathcal{F}(z$$=$$1))$ to specify an LTL property over two variables. $\mathcal{G}$ and $\mathcal{F}$ are two operators of LTL.
We model $\mathcal{P}$ to a PDNet $N_\mathcal{P}$, shown in Figure \ref{Fg:Motivation}($b$).
Then, the negation of this formula is translated to a Büchi automaton \cite{He2021More} $\mathcal{A}_{\neg\psi}$ in Figure \ref{Fg:Motivation}($c$). According to the traditional automata-theoretic approach, a product automaton is synchronized with an on-the-fly way to check the existence of a counterexample. In this example, we find a counterexample (an acceptable path) as a run $y:=2$, $z:=x$, $x:=1$ corresponding to $t_{20},t_{30},t_{10}$ in Figure \ref{Fg:Motivation}($d$). 

It can be observed that a partial-order run of $\mathcal{P}$ represents a run without enforcing an order of execution on independence. Based on state-of-the-art methods \cite{Rodriguez2015Unfolding,Coti2021Quasi,Schemmel2020Symbolic}, there are two partial-order runs based on unfolding semantics. These methods provide an optimal exploration of the unfolding structure, as the event structure in Figure \ref{Fg:Motivation}($e$). The node of the tree-like structure has different event sets to cover all possible partial-order runs. 
However, constructing a complete unfolding structure in advance and then checking the property is wasteful for LTL. Because some parts of the unfolding structure are unnecessary for verifying the property.



Unfolding is a succinct structure merging all partial-order runs.
Consider the unfolding of the PDNet $N_\mathcal{P}$ shown in Figure \ref{Fg:Motivation}($f$). 
There are $2$ partial-order runs via $5$ events and $18$ conditions.
The relations (causality, conflict, concurrency) between events are listed in Figure \ref{Fg:Motivation}($g$). Obviously, this unfolding preserves all concurrency from PDNet, such as $e_0$ $co$ $e_1$, meaning $t_{10}$ and $t_{20}$ are concurrent.
Thus, $\{e_0,e_1,e_2\}$ and $\{e_1,e_3,e_4\}$ as two maximal configurations capture two partial-order runs of $\mathcal{P}$ by the exploration tree in Figure \ref{Fg:Motivation}($h$).

In this paper, the Büchi automaton $\mathcal{A}_{\neg\psi}$ should be translated to a Petri net \cite{Khomenko2004Parallel}, such as the Büchi PDNet $N_{\neg\psi}$ in Figure \ref{Fg:Motivation}($i$), synchronized with $N_\mathcal{P}$.
Here, $p_1$ is an acceptable place corresponding to the acceptable state $q_1$, and the guard functions of transitions are from the arc labels in $\mathcal{A}_{\neg\psi}$.
Such synchronization as a folded net for LTL does not contain complete information on relations of events. 
Thus, the existing methods \cite{Rodriguez2015Unfolding} with tree-like structures are not applicable to the traditional on-the-fly methods of LTL.
At the same time, the traditional generation encounters an NP-complete problem when adding events to the unfolding \cite{Rodriguez2015Unfolding}. 
For example, when adding the new event $e_4$, the events that are in conflict and concurrent with $e_4$ need to be calculated, and all possible combinations $\{e_4\}$ and $\{e_1,e_4\}$ should be enumerated for possible extensions. This implies that traditional unfolding generations are also inherently stateful.
As the number of events increases, the computational complexity also rapidly increases.

Then, we define an exploration tree (as a binary tree) with a delayed transition set based on PDNet. The tree guides $N_\mathcal{P}$ to capture two partial-order runs ($n_4$ and $n_7$ as terminal nodes). Here, when $n_0$ is explored, we get the next enabled transition $t_{20}$ under $\emptyset\cup\{t_{10}\}$. And we delay $t_{10}$ and prioritize $t_{30}$ under $t_{10}\boxplus t_{30}$ ($t_{10}$ and $t_{30}$ are in conflict). Thus, $n_1$ is the left branch of $n_0$, while $n_2$ is the right one. 
When $t_{30}$ occurs, $t_{10}$ is removed from the delayed set in $n_5$. Thus, it can explore the partial-order run with the later-fired $t_{10}$ of $\mathcal{P}$.
If using the concept of the disabled set from the existing methods, the disability of $t_{10}$ cannot explore this partial-order run.
As checking $\psi$, the Büchi PDNet product $\mathcal{N}_{\neg\psi}$ is synchronized by $N_\mathcal{P}$ and $N_{\neg\psi}$ shown in Figure \ref{Fg:Motivation}($j$) (the bolded arcs for synchronization).
Then, we generate the unfolding $\mathcal{N}_{\neg\psi}$ as the exploration tree is constructed, shown in Figure \ref{Fg:Motivation}($k$). 
For instance, when $e_8$ is extended, it yields a counterexample by $e_0,e_1,e_5,e_6,e_7,e_8$ (corresponding to $t_{20},t_{30},t_{10}$) for this LTL property. That is, $\mathcal{P}\not\models\psi$.
In summary, this motivating example illustrates on-the-fly unfolding with optimal exploration presented in this paper.

\section{Program Dependence Net (PDNet)}\label{Sec:PDNet}

\subsection{Model Checking on PDNet}\label{Sub:MC}

PDNet, as a kind of Petri net \cite{BOUJARWAH1996Modelling,Yao1997Mapping}, is ideally suited to model concurrent programs and describe the dependencies between actions. The syntax of concurrent programs is given in Ding et al. \cite{2023arXiv2301Ding}.
In the following, $\mathbb{B}$ is the set of Boolean predicates with standard logic operations, $\mathbb{E}$ is a set of expressions, $Type[e]$ is the type of an expression $e$$\in$$\mathbb{E}$, i.e., the type of the values obtained when evaluating $e$, $Var(e)$ is the set of all variables in an expression $e$, $\mathbb{E}_V$ for a variable set $V$ is the set of expressions $e$$\in$$\mathbb{E}$ such that $Var(e)$$\subseteq$$V$, $Type[v]$ is the type of a variable $v$$\in$$V$, $\mathbb{O}$ is the set of constants, and $Type[o]$ is the type of constant $o$$\in$$\mathbb{O}$.

\begin{definition}[PDNet]\label{Def:PDNet}
PDNet is defined as a $9$-tuple $N$ $::=$ $(\Sigma$, $V$, $P$, $T$, $F$, $C$, $G$, $E$, $I)$, where:
	
(1) $\Sigma$ is a finite non-empty set of types called color sets.
	
(2) $V$ is a finite set of typed variables. $\forall v$$\in$$V\colon Type[v]$$\in$$\Sigma$.
	
(3) $P=P_c\cup P_v\cup P_f$ is a finite set of places.  $P_c$ is a subset of control places, $P_v$ is a subset of variable places, and $P_f$ is a subset of execution places.
	
(4) $T$ is a finite set of transitions and $T\cap P=\emptyset$.
	
(5) $F\subseteq(P\times T)\cup(T\times P)$ is a finite set of directed arcs. $F=F_c\cup F_{rw}\cup F_f$. Concretely, $F_c\subseteq(P_c\times T)\cup(T\times P_c)$ is a subset of control arcs, $F_{rw}\subseteq(P_v\times T)\cup(T\times P_v)$ is a subset of read-write arcs, and $F_f\subseteq(P_f\times T)\cup(T\times P_f)$ is a subset of execution arcs.
	
(6) $C\colon P$$\rightarrow$$\Sigma$ is a color set function that assigns a color set $C(p)$ belonging to the set of types $\Sigma$ to each place $p$.
	
(7) $G\colon T$$\rightarrow$$\mathbb{E}_V$ is a guard function that assigns an expression $G(t)$ to each transition $t$. $\forall t$$\in$$T\colon Type[G(t)]\in BOOL)\wedge(Type[Var(G(t))]\subseteq\Sigma$.
	
(8) $E\colon F$$\rightarrow$$\mathbb{E}_V$ is a function that assigns an arc expression $E(f)$ to each arc $f$. $\forall f$$\in$$F$$:$ $(Type[E(f)]=C(p(f))_{MS})\wedge(Type[Var(E(f))]\subseteq\Sigma)$, where $p(f)$ is the place connected to arc $f$.
	
(9) $I\colon P$$\rightarrow$$\mathbb{E}_\emptyset$ is an initialization function that assigns an initialization expression $I(p)$ to each place $p$. $\forall p$$\in$ $P\colon Type[I(p)]$$=$$C(p)_{MS})\wedge(Var(I(p))$$=$$\emptyset$.
\end{definition}

In the context of this paper, let $N$ be a PDNet. 
More details of PDNet and LTL formula of PDNet are in Ding et al. \cite{2023arXiv2301Ding}. 
According to the automata-theoretic approach, the LTL formula needs to be converted into an automaton and synchronized with PDNet's state space to check for the existence of a counterexample. This on-the-fly exploration constructs and checks the state space simultaneously, avoiding generating all states in advance \cite{Ding2023EnPAC}.
However, the complete interleaving of some concurrent actions brings about the path-explosion problem. Therefore, based on this idea, we adopt the conversion and synchronization to PDNet, and the PDNet synchronization (called Büchi PDNet product) is defined in Appendix \ref{App:Syn}.

As an illustration, Figure \ref{Fg:Motivation}($j$) demonstrates how to construct the PDNet nsynchronization $\mathcal{N}_{\neg\psi}$ of $N_{\neg\psi}$ and $N$, where bolded arcs are added by synchronization algorithm in Appendix \ref{App:Syn}.
For instance, $(v_x, I_0)$ and $(I_0,v_x)$ represent the observation of Büchi transition $I_0$. $(P_B,I_0)$ and $(I_0,P_S)$ represent the turn from $N_{\neg\psi}$ to $N$.


Unfortunately, not all runs that violate $\psi$ are detected by illegal infinite-trace. 
This is because $N$ is synchronized only with the visible transitions, and the invisible transitions are still concurrent.
Under the consideration of invisible transitions that cannot change the marking of observable place, the PDNet synchronization may reach a marking enabling an infinite occurrence sequence of invisible transitions without scheduling $N_{\neg\psi}$.
Let $\pi$ be a run of $\mathcal{N}_{\neg\psi}$. We can split $\pi$ into two subsequences: a finite prefix $\pi_1$ ending with the last occurrence of a visible transition and an infinite suffix $\pi_2$ containing only invisible transitions. 
$M_{n}$ is the marking reached by the occurrence of $\pi_1$. And $M_{n+k}$ is the marking reached by the occurrence of $t_{k}$ ($k>0$) in $\pi_2$.
We project the marking $M$ of $\mathcal{N}_{\neg\psi}$ onto $P_{Obs}\cup P_\mathcal{A}$, denoted by $\downarrow M$.
Since $\pi_2$ does not contain visible transitions, $\downarrow M_{n+k}$ ($k>0$) is the same as $\downarrow M_{n}$.
Let $q_n$ be the Büchi place in $\downarrow M_{n}$.
If $\mathcal{A}_{\neg\psi}$ with $q_n$ as the initial state can accept such infinite marking projection sequence $(\downarrow M_{n})^\omega$, $\pi$ is an acceptable run but not an illegal infinite-trace. 
$M_{n}$ is called a monitor marking in this run, which will be used in the following analysis.
We have to check the existence of such an acceptable run, called illegal livelock.
Thus, whether an LTL property is satisfied by a concurrent program is decided by the following theorem.

\begin{theorem}\label{The:Emp}
	Let $\mathcal{P}$ be a concurrent program, $N$ be the PDNet constructed for $\mathcal{P}$, $\psi$ be an LTL-$_\mathcal{X}$ formula, $N_{\neg\psi}$ be the Büchi PDNet of $\psi$, and $\mathcal{N}_{\neg\psi}$ be the Büchi PDNet product of $N$ and $N_{\neg\psi}$.
	$\mathcal{P}\models\psi$ iff $\mathcal{N}_{\neg\psi}$ has no illegal infinite-trace and no illegal livelock.
\end{theorem}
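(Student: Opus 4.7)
The plan is to prove the biconditional by contrapositive in both directions, reducing to the standard automata-theoretic correctness of LTL over Büchi automata and then handling the two possible shapes of a counterexample run of $\mathcal{N}_{\neg\psi}$ separately.

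First, I would set up notation linking runs of $\mathcal{N}_{\neg\psi}$ with runs of $N$ and runs of $\mathcal{A}_{\neg\psi}$. By the synchronization construction of Appendix~\ref{App:Syn}, the projection of any firing sequence of $\mathcal{N}_{\neg\psi}$ onto the transitions inherited from $N$ yields a firing sequence of $N$ (hence a run of $\mathcal{P}$), and the projection of the induced marking sequence onto $P_{\mathit{Obs}}\cup P_\mathcal{A}$ (the $\downarrow M$ operation already defined in the excerpt) tracks a run of $\mathcal{A}_{\neg\psi}$ on the observable trace. I would record as a preliminary lemma that visible transitions alternate strictly with Büchi transitions in $\mathcal{N}_{\neg\psi}$, while invisible transitions leave $\downarrow M$ unchanged; this is the key structural fact that makes the two cases (illegal infinite-trace vs.\ illegal livelock) exhaustive.

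For the ``$\Leftarrow$'' direction I would argue by contrapositive: assume $\mathcal{P}\not\models\psi$. Then there is an infinite run $\sigma$ of $\mathcal{P}$ whose trace $w$ is in $\mathcal{L}(\mathcal{A}_{\neg\psi})$, so some accepting run $\rho$ of $\mathcal{A}_{\neg\psi}$ reads $w$. Using the synchronization, $\sigma$ and $\rho$ can be interleaved into a single run $\pi$ of $\mathcal{N}_{\neg\psi}$. Case split on whether $\sigma$ contains infinitely many visible transitions: if so, $\pi$ visits the accepting Büchi place infinitely often and is therefore an illegal infinite-trace; if not, $\sigma$ has a finite visible prefix followed by an infinite invisible suffix, so $\pi$ decomposes as $\pi_1\pi_2$ with $\pi_2$ invisible, and because $\rho$ is an accepting run on the stuttered extension of $w$ (here the assumption that $\psi$ is LTL-$_\mathcal{X}$ is essential, since stutter-invariance guarantees that the suffix $(\downarrow M_n)^\omega$ is still accepted from the Büchi place in $\downarrow M_n$), we obtain an illegal livelock.

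For the ``$\Rightarrow$'' direction, again by contrapositive, suppose $\mathcal{N}_{\neg\psi}$ exhibits an illegal infinite-trace $\pi$. Project $\pi$ onto $N$'s transitions to obtain an infinite run $\sigma$ of $\mathcal{P}$, and project the Büchi component to obtain an accepting run of $\mathcal{A}_{\neg\psi}$ on the trace of $\sigma$; hence $\sigma\models\neg\psi$ and $\mathcal{P}\not\models\psi$. If instead $\mathcal{N}_{\neg\psi}$ exhibits an illegal livelock $\pi=\pi_1\pi_2$ with monitor marking $M_n$ reached after $\pi_1$, then the projection of $\pi_1\pi_2$ onto $N$ yields an infinite run $\sigma$ of $\mathcal{P}$ whose observable trace is the finite word read by $\pi_1$ stuttered forever; by the hypothesis on $\mathcal{A}_{\neg\psi}$ starting from $q_n$ accepting $(\downarrow M_n)^\omega$ and by stutter-invariance of LTL-$_\mathcal{X}$, this trace satisfies $\neg\psi$, so again $\mathcal{P}\not\models\psi$.

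The main obstacle is the illegal livelock case, specifically justifying that acceptance of the stuttered projection $(\downarrow M_n)^\omega$ by $\mathcal{A}_{\neg\psi}$ is equivalent to the underlying program trace violating $\psi$. This requires two non-trivial ingredients: (i) that invisible transitions of $\mathcal{N}_{\neg\psi}$ genuinely do not alter $\downarrow M$ (a syntactic property of the synchronization that needs to be verified from the construction in Appendix~\ref{App:Syn}), and (ii) the stutter-invariance of LTL-$_\mathcal{X}$, which is exactly why the theorem is stated for LTL-$_\mathcal{X}$ and not full LTL. Once these are in place, the remaining arguments are routine bookkeeping on the product construction.
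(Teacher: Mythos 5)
Your proposal is correct and follows the same automata-theoretic reduction that the paper itself invokes: an illegal trace of $\mathcal{N}_{\neg\psi}$ corresponds to a run of $\mathcal{P}$ accepted by $\mathcal{A}_{\neg\psi}$, i.e., a violation of $\psi$, and conversely. In fact the paper only offers a two-sentence intuition for this theorem (deferring formal proofs to the cutoff characterization of Theorem~\ref{The:IllTra} in Appendix~\ref{App:Proof}), so your explicit case split between runs with infinitely many visible transitions (illegal infinite-trace) and runs with an eventually-invisible suffix (illegal livelock), together with the appeal to stutter-invariance of LTL-$_\mathcal{X}$ in the latter case, supplies precisely the details the paper leaves implicit.
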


The intuition behind this theorem is that if $\mathcal{N}_{\neg\psi}$ has an illegal trace, there exists a run accepted by $\mathcal{A}_{\neg\psi}$. 
It implies $\mathcal{P}$ has a run violates $\psi$.
Conversely, such a run does not exist, implying $\mathcal{P}\models\psi$.
The problem is translated to the existence of illegal traces, checked during unfolding $\mathcal{N}_{\neg\psi}$ \cite{Khomenko2004Parallel}.

\subsection{PDNet Unfolding}\label{Sub:Punf}

Unfolding \cite{Esparza2002An} based on partial-order semantics is often exponentially smaller than the reachability graph. In this paper, we define unfolding for PDNet. 

Let $x$ and $y$ be two nodes (places or transitions) of a PDNet. 
If there exists a path $[x,y]$ with at least one arc, $x$ and $y$ are in causal relation, denoted by $x<y$.
If there are distinct transitions $t,t'\in T$ such that $t\neq t'$, $^\bullet t\cap^{\bullet}$$t'\neq\emptyset$, $[t,x]$ and $[t',y]$ are two paths in $N$, $x$ and $y$ are in conflict relation, denoted by $x\#y$. A node $y$ is in self-conflict if $y\#y$.
If neither $x>y$ nor $y<x$ nor $x\#y$, $x$ and $y$ are in concurrent relation, denoted by $x$ $co$ $y$.

Let $O::=(\mathcal{Q}, \mathcal{E}, \mathcal{W})$ be an occurrence net, where $\mathcal{Q}$ is a finite set of conditions (places) such that	$\forall q\in \mathcal{Q}$, $\mid^{\bullet}$$q\mid\leq 1$,
$\mathcal{E}$ is a finite set of events (transitions) such that $\forall e\in\mathcal{E}, ^{\bullet}$$e \neq\emptyset$, and
$\mathcal{W}\subseteq (\mathcal{Q}\times\mathcal{E})\cup(\mathcal{E}\times\mathcal{Q})$ is a finite set of arcs.
There are finitely many $y'$ such that $y'<y$.
Let a partial order $\prec$ be the transitive relation of $\mathcal{W}$.
$Min(O)$ denotes the set of conditions with an empty preset, which are minimal w.r.t. $\prec$.

\begin{definition}[Branching Process of PDNet]\label{Def:BP}
Let $(O,\lambda)$ be a branching process of $N$. $O=(\mathcal{Q}, \mathcal{E}, \mathcal{W})$ and $\lambda$ is a homomorphism from $O$ to $N$. $\lambda\colon \mathcal{Q}\cup \mathcal{E}\rightarrow P\cup T$ such that 	
 
(1) $\lambda(\mathcal{Q})\subseteq P$ and $\lambda(\mathcal{E})\subseteq T$, 	

 (2) $\forall e\in \mathcal{E}$, $\lambda(^{\bullet}$$e)= ^{\bullet}$$\lambda(e)$ and $\lambda( e^{\bullet})=\lambda(e)^{\bullet}$, 	

 (3) $\lambda(Min(O))=M_{0}$, and 	

 (4) $\forall e,e'\in\mathcal{E}$, if $^{\bullet}$$e=^{\bullet}$$e'$ and $\lambda(e)=\lambda(e')$, then $e=e'$.
\end{definition}

There is no redundancy in $(O,\lambda)$, and the environments of all transitions are preserved.
If $\lambda(e)$$=$$t$, $e$ is called $t$-labelled event.
$Min(O)$ corresponds to the initial marking of $N$.
Let $\hbar::=(O,\lambda)$ be a branching process of $N$.
Branching processes of the same PDNet differ on the scales by unfolding.
We define a prefix relation $\sqsubseteq$ which formalizes that one branching process is less than another by unfolding the same PDNet.
A branching process $\hbar'=(O',\lambda')$ of $N$ is a prefix of another branching process $\hbar=(O,\lambda)$, denoted by $\hbar'\sqsubseteq\hbar$, if $O'=(\mathcal{Q}',\mathcal{E}',\mathcal{W}')$ is a subnet of $O$ containing all minimal elements in $Min(O)$, and such that
(1) $\forall e\in \mathcal{E}', (q,e)\in \mathcal{W}\vee(e,q)\in \mathcal{W}$ $\Rightarrow$ $q\in \mathcal{Q}'$, and
(2) $\forall q\in \mathcal{Q}'$, $(e,q)\in\mathcal{W}$ $\Rightarrow$ $e\in \mathcal{E}'$, and
(3) $\lambda'$ is the restriction of $\lambda$ to $\mathcal{Q}'\cup \mathcal{E}'$.
There always exists a unique maximal branching process w.r.t. $\sqsubseteq$ called the unfolding, denoted by $Unf_{N}$.
To formulate how $Unf_{N}$ describes the behavior of a PDNet, we define the notions of configuration and cut.

\begin{definition}[Configuration]\label{Def:CoCut}
Let $\hbar$ be a branching process of $N$. A configuration $\mathcal{C}$ of $\hbar$ is a finite set of events such that
	
(1) $\forall e\in\mathcal{C}$, $e'\prec e$ $\Rightarrow$ $e'\in\mathcal{C}$ ($\mathcal{C}$ is causally closed), and
	
(2) $\forall e,e'\in\mathcal{C}$, $\neg(e\#e')$ ($\mathcal{C}$ is conflict-free).
\end{definition}

Let $\mathcal{Q}$ be a finite set of conditions. $\mathcal{Q}$ is a \textit{co}-set of $\hbar$ if $\forall q,q'\in\mathcal{Q}$, $q\neq q'$ $\wedge$ $q$ $co$ $q'$. That is, any two distinct conditions are pairwise in concurrent relation.
Then, a maximal \textit{co}-set w.r.t. $\subseteq$ is a cut.
For $e\in\mathcal{E}$, $[e]::=\{e'\vert e'\preceq e\}$ is the local configuration of $e$.
Then, the cut of a configuration $\mathcal{C}$ is
$Cut(\mathcal{C})$$=$$Min(O)\cup\left(\bigcup\limits_{e\in\mathcal{C}} e^{\bullet}\right) \setminus \left(\bigcup\limits_{e\in \mathcal{C}}{^{\bullet}e}\right)$.
And $Mark(\mathcal{C})$$=$$\lambda(Cut(\mathcal{C}))$ represents a marking of $N$, called the final marking w.r.t. $\mathcal{C}$.
Loosely speaking, $Mark(\mathcal{C})$ is the reachable marking from $M_0$ by occurring the events in $\mathcal{C}$ exactly once.
$\mathcal{C}$ has one or more event sequences (called linearizations) that lead to the same marking. By applying $\lambda$ to a linearization, we get an occurrence sequence of $N$.

For a configuration $\mathcal{C}$ and an event set $\mathcal{E}'$, the fact that $\mathcal{C}\cup \mathcal{E}'$ is a configuration and $\mathcal{C}\cap \mathcal{E}'=\emptyset$ is denoted by $\mathcal{C}\oplus \mathcal{E}'$.
Therefore, such a set $\mathcal{E}'$ is a suffix of $\mathcal{C}$, and $\mathcal{C}\oplus \mathcal{E}'$ is an extension of $\mathcal{C}$.
For a configuration $\mathcal{C}'$, if $\mathcal{C}\subset\mathcal{C}'$, there exists a non-empty suffix $ \mathcal{E}$ of $\mathcal{C}$ such that $\mathcal{C}\oplus\mathcal{E}=\mathcal{C}'$.
Then, we define how to extend the current branching process.

\begin{definition}[Possible Extension of PDNet Unfolding]\label{Def:PE}
Let $\hbar=(O,\lambda)$ be a prefix of $Unf_N$. 
A possible extension of $\hbar$ is a pair $(t,\chi)$, where 

(1) $\chi$ is a \textit{co}-set in $\hbar$, such that $\lambda(\chi)=^{\bullet}$$t$, and 

(2) $\hbar$ contains no $t$-labelled event with a preset $\chi$.
\end{definition}

The pair $(t,\chi)$ is an event captured by unfolding algorithms \cite{Esparza2002An} to extend the prefix $\hbar$ of $Unf_N$.
A new prefix $\hbar'$ is extended by adding all possible extensions to $\hbar$, and $\hbar\sqsubseteq\hbar'$. 
However, traditional unfolding generation methods are inherently stateful and must enumerate all possible combinations for $\hbar$, which seriously limits the performance as $\hbar$ grows. Although it is possible to prune the entire combinations by defining conditions that must not be enumerated, there are significant limitations to these conditions \cite{Rodrguez2013An}.
It is greatly limited for the practical application of concurrent programs, especially when the concurrent program exhibits high concurrency. 

\subsection{Cutoff of PDNet}\label{Sub:Cutoff}

Although $Unf_N$ is always infinite, there is always at least one finite complete prefix of $Unf_N$.
The idea \cite{Esparza2002An} is that if the prefix contains two events, then the continuations of their local configurations are isomorphic. Thus, only one of them needs to be explored further. 
Let $\mathcal{C}_1$ and $\mathcal{C}_2$ be two finite configurations leading to the same marking, i.e., $Mark(\mathcal{C}_1)=Mark(\mathcal{C}_2)$.
There exists an isomorphism $\theta$ mapping from the finite extensions $\mathcal{C}_1\oplus \mathcal{E}$ to the finite extensions of $\mathcal{C}_2\oplus \theta(\mathcal{E})$.

\begin{definition}[Adequate Order]\label{Def:AO}
A partial order $\prec$ on the finite configuration of $Unf_N$ is an adequate order if
	
(1) $\prec$ is well-founded,
	
(2) $\mathcal{C}_1\subset\mathcal{C}_2$ implies $\mathcal{C}_1\prec\mathcal{C}_2$, and
	
(3) $\prec$ is preserved by finite extensions. 
If $\mathcal{C}_1\prec\mathcal{C}_2$ and $Mark(\mathcal{C}_1)=Mark(\mathcal{C}_2)$, then an isomorphism $\theta$ satisfies $\mathcal{C}_1\oplus \mathcal{E}\prec\mathcal{C}_2\oplus \theta(\mathcal{E})$ for all finite extensions.
\end{definition}

Based on $Min(O)$, $Unf_N$ is generated by adding new events following the adequate order.
Then, we formally define the cut-off event of PDNet unfolding w.r.t. $\prec$.

\begin{definition}[Cut-off of PDNet Unfolding]\label{Def:Cut}
Let $\prec$ be an adequate order on the finite configuration of $Unf_N$, $\hbar$ a prefix of $Unf_N$, and $e$ an event in $\hbar$. 
The event $e$ is a cut-off event, if there exists an event $e'$ (called companion) in $\hbar$ such that $Mark([e])=Mark([e'])$, and either
 (1) $e'<e$, or
 (2) $\neg(e'<e)$ $\wedge$ $[e']\prec[e]$.

 $e$ is a successful cut-off event, if

(\uppercase\expandafter{\romannumeral1}) $[e]\setminus[e']\cap T_I \neq\emptyset$, or

(\uppercase\expandafter{\romannumeral2}) $\exists e_f\in [e]\colon\lambda(e_f)=t_f$, such that $Mark([e'])$ is a monitor marking.
\end{definition}


Once a cut-off event is extended, there are two kinds of successful cut-off events to check illegal traces.


\begin{theorem}\label{The:IllTra}
    Let $N$ be a PDNet, $\psi$ be an LTL-$_\mathcal{X}$ formula, $\mathcal{N}_{\neg\psi}$ be the Büchi PDNet product, and $\hbar$ be a prefix of $Unf_\mathcal{N}$.
\begin{enumerate}
    \item[$\bullet$] $\mathcal{N}_{\neg\psi}$ has an illegal infinite-trace iff $\hbar$ has a successful cutoff-\uppercase\expandafter{\romannumeral1}.
    \item[$\bullet$] $\mathcal{N}_{\neg\psi}$ has an illegal livelock iff $\hbar$ has a successful cutoff-\uppercase\expandafter{\romannumeral2}.
\end{enumerate}
\end{theorem}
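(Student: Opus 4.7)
The plan is to prove each biconditional separately, establishing a correspondence between infinite accepting/livelock runs of $\mathcal{N}_{\neg\psi}$ and cutoff structures in the unfolding prefix $\hbar$. In both directions the key machinery is the adequate-order isomorphism $\theta$ from Definition \ref{Def:AO}(3), which transports finite extensions past one configuration to finite extensions past an equivalently-marked configuration, preserving the labelling and hence the firing pattern of the underlying transitions. Throughout, I would rely on Theorem \ref{The:Emp} only implicitly, since this theorem is purely about the structural witness in $\hbar$.

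For the first bullet (illegal infinite-trace $\Leftrightarrow$ successful cutoff-\uppercase\expandafter{\romannumeral1}), I would first take an illegal infinite-trace $\pi$. By the Büchi acceptance condition, $\pi$ fires transitions from $T_I$ (those witnessing the traversal of an acceptable Büchi place) infinitely often. Since the reachable marking set along $\pi$ is finite, the pigeonhole principle yields two positions $i<j$ with the same marking and with at least one $T_I$-occurrence between them. Lifting $\pi$ to a linearization of a configuration in $Unf_{\mathcal{N}}$ via Definition \ref{Def:BP} produces events $e'$ and $e$ with $Mark([e'])=Mark([e])$, $e'<e$ (or $[e']\prec[e]$), and $[e]\setminus[e']\cap T_I\neq\emptyset$, so $e$ is a successful cutoff-\uppercase\expandafter{\romannumeral1} by Definition \ref{Def:Cut}. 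Conversely, given such an $e$ with companion $e'$, iterating $\theta$ past the loop $[e]\setminus[e']$ yields an infinite occurrence sequence that fires $T_I$-transitions infinitely often, which through $\lambda$ gives an accepted run of the Büchi product, hence an illegal infinite-trace.

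For the second bullet (illegal livelock $\Leftrightarrow$ successful cutoff-\uppercase\expandafter{\romannumeral2}), the forward direction exploits the decomposition $\pi=\pi_1\cdot\pi_2$ described just before Theorem \ref{The:Emp}: $\pi_1$ ends at the last visible transition and reaches a monitor marking $M_n$, while $\pi_2$ is an infinite suffix of invisible transitions. Finiteness of the reachable marking set along $\pi_2$ again yields a repetition; lifting this repetition gives events $e'$ and $e$ with $Mark([e'])=Mark([e])$ and an event $e_f\in[e]$ labelled by the distinguished livelock-witness transition $t_f$, so that $Mark([e'])$ coincides with the monitor marking, making $e$ a successful cutoff-\uppercase\expandafter{\romannumeral2}. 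The reverse direction again uses $\theta$ to iterate $[e]\setminus[e']$ forever, appending an infinite invisible sequence to the finite run reaching $M_n$, which is exactly an illegal livelock.

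The main obstacle will be the pigeonhole argument in each forward direction: justifying that some marking must recur along the relevant infinite suffix requires controlling the portion of PDNet reached, and in the livelock case one must additionally argue that the recurrence falls entirely within the invisible-transition subnet so that $\pi_2$'s projected marking is preserved. A secondary subtlety is that Definition \ref{Def:Cut} admits two companion options (cases (1) and (2)); both must be handled when lifting the repeated marking to the unfolding, but property (3) of Definition \ref{Def:AO} ensures that $\theta$ is available in either case, so the two options collapse to the same loop-iteration construction when synthesizing the infinite witness.
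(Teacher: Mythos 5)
Your overall skeleton matches the paper's: the two backward directions are the loop-unrolling argument (take $\pi_1$ a linearization of $[e']$ and $\pi_1\pi_2$ a linearization of $[e]$, observe the reached markings coincide, and repeat the middle segment forever), and the two forward directions are pigeonhole arguments over the finitely many reachable markings, with the livelock case additionally using the visible-prefix/invisible-suffix decomposition and the monitor marking witnessed by the $t_f$-labelled event $e_f\in[e]$. The paper proves exactly these four implications in Appendix B.

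There is, however, one concrete gap in your forward direction for the first bullet. Your pigeonhole produces two positions $i<j$ of the linearization with the same marking; but the marking reached after $i$ steps is $Mark(\mathcal{C}_i)$ for the \emph{prefix} configuration $\mathcal{C}_i$ consisting of the first $i$ events, whereas Definition \ref{Def:Cut} requires $Mark([e'])=Mark([e])$ for the \emph{local} configurations of two single events. These do not coincide in general, so ``lifting'' the repeated marking does not by itself produce a cutoff event. The paper bridges this by introducing \emph{bad configurations} (configurations containing at least $S+1$ $I$-events) and invoking the cited fact that no reachable marking of $\mathcal{N}_{\neg\psi}$ concurrently enables two distinct $I$-transitions: hence the $I$-events of a configuration are totally ordered by causality, their local configurations form a nested chain, and the pigeonhole is applied to the $S+1$ markings $Mark([e_1]),\dots,Mark([e_{S+1}])$ of that chain, which is what actually yields a companion pair satisfying Definition \ref{Def:Cut} with an $I$-event in $[e]\setminus[e']$. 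You need this (or an equivalent) step, and an analogous care is needed in the livelock case to identify $e$ and $e'$ as single events whose local configurations are linearized by $\pi_1\pi_2\pi_3$ and $\pi_1$. A smaller caveat: your claim that companion cases (1) and (2) of Definition \ref{Def:Cut} ``collapse'' under $\theta$ in the backward direction is optimistic; the construction of $\pi_1(\pi_2)^\omega$ works verbatim only when $e'<e$, so that $[e']\subseteq[e]$ and $\pi_2$ is literally a firable loop, and the paper itself silently restricts to that case.
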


The correctness is proven in Appendix \ref{App:Proof}. 
The complete prefix construction of $Unf_\mathcal{N}$ terminates when no new event having no cutoff events among its predecessors can be extended. 
The prefix of $Unf_\mathcal{N}$ cut off by such events satisfying completeness and finiteness \cite{Esparza2002An} ensures that the whole checking process can be terminated, even if no illegal trace exists.
$\hbar$ contains at most $S^2$ non-cutoff events, where $S$ is the number of reachable markings of $\mathcal{N}_{\neg\psi}$. 

\section{On-the-fly Unfolding with Optimal Exploration}\label{Sec:Unf}

\subsection{Conflict Transition of PDNet}\label{Sub:ConT}

Advances in formal verification methods \cite{Rodriguez2015Unfolding,Coti2021Quasi} have enabled the optimal exploration via a tree-like structure. This tree-like structure is constructed from an unfolding structure (called event structure) with complete information on the relations of events. The child nodes in this tree-like structure are derived from causal and conflict relations for updating enabled events and conflict ones. In PNs, causal relations are easily captured by the order of transition occurrence, and conflict relations are defined on the transition structure.
However, traditional conflict relations between transitions in PNs cannot cover all the conflict actions from concurrent program semantics.
Thus, it is insufficient to identify conflict transitions without defining independence in concurrent programs. We apply such independence from concurrent program semantics to guide the formal definition of conflict transitions and calculate these conflicts during modeling. 

So far, partial-order techniques use independence \cite{Rodriguez2015Unfolding} to avoid exploring equivalent execution belonging to the same partial-order run. 
The independent relation between actions is an under-approximation of commutativity \cite{Rodriguez2015Unfolding}. Intuitively, an independent relation is valid if any pair of independent actions can be executed in any order and still produce the equivalent executions w.r.t. the LTL-$_\mathcal{X}$ formula. 
A common idea for identifying independent relations is determining the dependence between actions. This is because two actions are independent if they are not dependent. 
PDNet describes the dependence (as a complement of independence) between actions of concurrent programs. 
Thus, we propose a formal definition of conflict transitions to identify conflict relations.

\begin{table}[h]\centering
\caption{PDNet Transition.}\label{Tab:PDNetTransition}
\begin{tabular}{ccc}
\toprule[1pt]
Action & Transition  & Operation \cite{2023arXiv2301Ding}   \\ \midrule
$ass$  & $assign$ transition                  & $\nu:=w$ \\ \midrule
$jum$  & $jump$ transition                    & \multirow{2}{*}{$jump$}   \\
$ret$  & $exit$ transition                    &    \\ \midrule
$tcd$  & \multirow{2}{*}{$branch$ transition} & if($w$)then($\tau_1*$)else($\tau_2*$) \\
$fcd$  &                                      & while($w$)do($\tau*$)   \\ \midrule
$call$ & $call$ transition                    & \multirow{2}{*}{$calls$}   \\
$rets$ & $return$ transition                  &  \\ \midrule
$acq$  & $lock$ transition                    & $\langle lock, \ell\rangle$  \\ \midrule
$rel$  & $unlock$ transition                  & $\langle unlock, \ell\rangle$ \\ \midrule
$sig$  & $signal$ transition                  & $\langle signal, \gamma \rangle$   \\ \midrule
$wa_1$ & \multirow{3}{*}{$wait$ Transition}   & \multirow{3}{*}{$\langle wait, \gamma, \ell \rangle$}   \\
$wa_2$ &  \\
$wa_3$ &  & \\
\bottomrule[1pt]
\end{tabular}
\end{table}
In this paper, we use $ass$, $tcd$, $fcd$, $acq$, $rel$, $wa_1$, $wa_2$, $wa_3$ and $sig$ as the abbreviations of the actions. Each of these actions corresponds to a specific PDNet transition as shown in Table \ref{Tab:PDNetTransition}.
We have informally explained how actions with different effects interleave with each other, i.e., the dependent relation between actions in Appendix \ref{App:Succ}. We consider two actions to be dependent on each other if both belong to the same thread or are interfered with by a shared variable/mutex/condition variable. 
As shown in Figure \ref{Fg:Motivation}, $x$$:=$$1$ and $z$$:=$$x$ are dependent because there is a data race, while $y$$:=$$2$ and $z$$:=$$x$ ($y$$:=$$2$ and $x$$:=$$1$) are independent. $t_{10}$ and $t_{30}$ are conflict transitions (denoted by $t_{10}\boxplus t_{30}$), meaning the dependence between $x$$:=$$1$ and $z$$:=$$x$.

Firstly, we introduce two sets for a transition to reference or define a variable of PDNet.

\begin{definition}[Reference Set and Definition Set of PDNet]\label{Def:DefRef}
Let $t$ be a transition of $N$.
The reference set of $t$ is defined by $Ref(t) ::= \{p\mid\forall p\in ^\bullet$$t\cap P_v: E(p,t) = E(t,p)\}$. 
The definition set of $t$ is defined by $Def(t) ::=
	\{p\mid\forall p \in ^\bullet$$t\cap P_v: E(p,t)\neq E(t,p)\}$.
\end{definition}

Here, $P_v$ is the variable place set of $N$ from Definition \ref{Def:PDNet}. There are two arcs (e.g., $(p,t)$ and $(t,p)$) between the variable place $p$ and the transition $t$, as long as $p\in^\bullet$$t$ based on the modeling rules \cite{2023arXiv2301Ding}. Then, we define conflict transitions.
 
\begin{definition}[Conflict Transitions of PDNet]\label{Def:Conf}
Let $t_1$ and $t_2$ be two transitions of a PDNet $N$. 
$\exists f_1$$\in$$^\bullet$$t_1$$\cap$$P_f$, $\exists f_2$$\in$$ ^\bullet$$t_2$$\cap$$P_f$ such that $M(f_1)$$\neq$$M(f_2)$ ($t_1$ and $t_2$ belong to the different threads).
$t_1$ and $t_2$ are conflict, if

(1) $t_1$ is $assign$ transition, $t_2$ is $assign$ (or $branch$) transitions, and
	$\exists p\in P_v$ such that $p\in Def(t_1)$ $\wedge$ ($p\in Def(t_2)$ $\vee$ $p\in Ref(t_2)$).
	
(2) $t_1$ is $lock$ transition or $wait$ transition for $wa_3$, 	
	(\romannumeral1) $t_2$ is $lock$ transition, and $\exists p_{\ell}\in P$ such that $p_{\ell}\in ^\bullet$$t_1$ $\wedge$ $p_{\ell}\in ^\bullet$$t_2$, or 	
	(\romannumeral2) $t_2$ is $unlock$ transition, and $\exists p_{\ell}\in P$ such that $p_{\ell}\in ^\bullet$$t_1$ $\wedge$ $p_{\ell}\in t_2$$^\bullet$, or	
	(\romannumeral3) $t_2$ is $wait$ transition, and $\exists p_{\ell}\in P$ such that $p\in ^\bullet$$t_1$ $\wedge$ ($p_{\ell}\in ^\bullet$$t_2$  $\vee$ $p_{\ell}\in t_2$$^\bullet$).
	
(3) $t_1$ is $unlock$ transition or the $wait$ transition for $wa_1$, 	
	(\romannumeral1) $t_2$ is $lock$ transition, and $\exists p_{\ell}\in P$ such that $p_{\ell}\in t_1$$^\bullet$ $\wedge$ $p_{\ell}\in ^\bullet$$t_2$, or	
	(\romannumeral2) $t_2$ is $unlock$ transition, and $\exists p_{\ell}\in P$ such that $p_{\ell}\in t_1$$^\bullet$ $\wedge$ $p_{\ell}\in t_2$$^\bullet$, or 	
	(\romannumeral3) $t_2$ is $wait$ transition, and $\exists p_{\ell}\in P$ such that $p_{\ell}\in t_1$$^\bullet$ $\wedge$ ($p_{\ell}\in ^\bullet$$t_2$ $\vee$ $p\in t_2$$^\bullet$).
	
(4) $t_1$ is the $wait$ transition for $wa_2$, $t_2$ is $signal$ transition, and
	$\exists p_{\gamma}\in P$ such that $p_{\gamma}\in ^\bullet$$t_1$ $\wedge$ $p_{\gamma}\in t_2$$^\bullet$.
	
(5) $t_1$ is $signal$ transition, 	
	(\romannumeral1) $t_2$ is the $wait$ transition for $wa_2$, and $\exists p_{\gamma}\in P$ such that $p_{\gamma}\in t_1$$^\bullet$ $\wedge$ $p_{\gamma}\in ^\bullet$$t_2$, or 	
	(\romannumeral2) $t_2$ is $signal$ transition, and $\exists p_{\gamma}\in P$ such that $p_{\gamma}\in t_1$$^\bullet$ $\wedge$ $p_{\gamma}\in t_2$$^\bullet$.
\end{definition}

\begin{figure}[t]\centering
\includegraphics[width=0.4\textwidth]{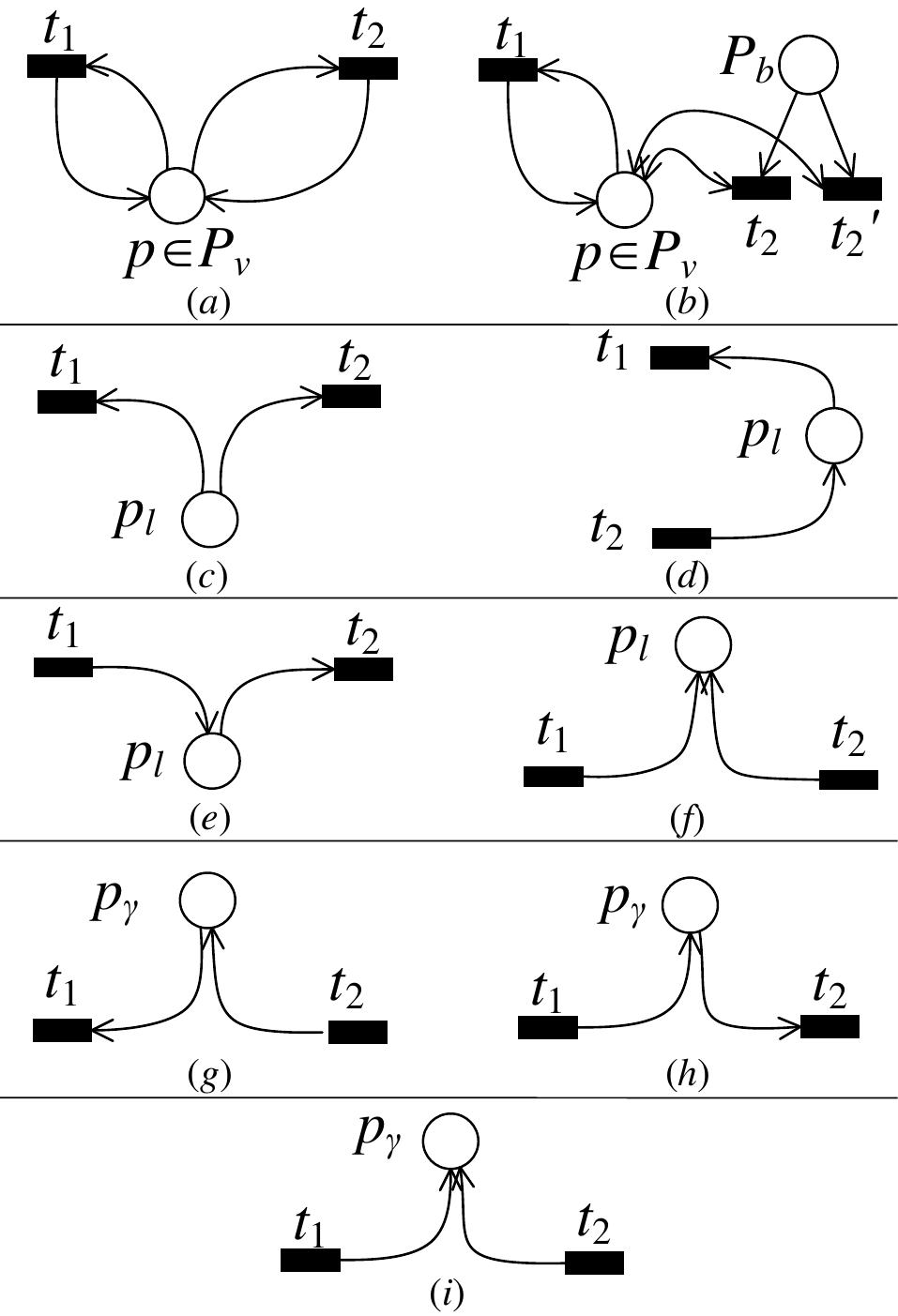}
\caption{Conflict Transitions of PDNet. 
($a$) $t_1$ is $assign$ transition and $t_2$ is $assign$ transition. ($b$) $t_1$ is $assign$ transition and $t_2$ is $branch$ transition. 
($c$) $t_1$ is $lock$ transition or $wait$ transition for $wa_3$ and $t_2$ is $lock$ transition or $wait$ transition of $wa_3$. ($d$) $t_1$ is $lock$ transition or $wait$ transition for $wa_3$ and $t_2$ is $unlock$ transition or $wait$ transition for $wa_1$. 
($e$) $t_1$ is $unlock$ transition or the $wait$ transition for $wa_1$ and $t_2$ is $lock$ transition or $wait$ transition of $wa_3$. ($f$) $t_1$ is $unlock$ transition or the $wait$ transition for $wa_1$ and $t_2$ is $unlock$ transition $wait$ transition of $wa_1$. 
($g$) $t_1$ is the $wait$ transition for $wa_2$ and $t_2$ is $signal$ transition. 
($h$) $t_1$ is $signal$ transition and $t_2$ is the $wait$ transition for $wa_2$. ($i$) $t_1$ is the $signal$ transition and $t_2$ is $signal$ transition.} \label{Fg:Conflict}
\end{figure}

Intuitively, conflicts are used to merge the prefix-sharing partial-order runs. An illustration for each case in Definition \ref{Def:Conf} is shown in Figure \ref{Fg:Conflict}. 
Then, we define a valid independence based on the conflict transitions of PDNet.

\begin{definition}[Valid Independence]\label{Def:VI}
Let $N_\mathcal{P}$ be the PDNet of a concurrent program $\mathcal{P}$.
The dependent relation $\boxplus_\mathcal{P}\subseteq T\times T$ is a reflexive, symmetric relation. $t\boxplus_\mathcal{P} t'$ hold if $t$ matches the transition described by $t_1$ and $t'$ matches the transition described by $t_2$ in Definition \ref{Def:Conf}.
Then, the valid independent relation is the complementary relation of $\boxplus_\mathcal{P}$, denoted by $\boxtimes_\mathcal{P} ::= (T\times T)\setminus\boxplus_\mathcal{P}$.
\end{definition}

As the example in Figure \ref{Fg:Motivation}($b$), $t_{10}\boxplus_\mathcal{P}t_{30}$. The reason is that $v_x\in Def(t_{10})\vee v_x\in Ref(t_{30})$,  satisfying Definition \ref{Def:Conf}(1). 
Therefore, we can obtain all necessary conflict transitions for our exploration tree of $\mathcal{N}_{\neg\psi}$. 

\subsection{Exploration Tree for PDNet}\label{Sub:Ter}

The existing methods \cite{Rodriguez2015Unfolding,Coti2021Quasi} propose a tree-like structure for optimal exploration. In this structure, the disabled set in a node records events that cannot occur in subsequent nodes, which can ensure that all partial-order runs are explored only once. 
However, if such a set is applied directly to $\mathcal{N}_{\neg\psi}$, it cannot ensure the completeness of partial-order runs. 
This is because disabling a transition prevents the partial-order run of the later-fired transition from being explored. 
Thus, we propose an exploration tree that differs from the existing methods. 

We define a new notion of delayed transition in each node of an exploration tree for PDNet.
Our exploration tree helps explore each partial-order run of PDNet only once. 
Intuitively, the node represents both the exploration of a configuration of PDNet unfolding and a choice to occur an enabled transition for the left child node as well as a conflict transition for the right child node.
Once a new event is extended to the unfolding structure, all possible concurrent combinations are enumerated by the traditional unfolding generation \cite{Rodrguez2013An}. Our exploration tree also avoids enumerating all combinations while generating an unfolding structure in this paper.
Next, we define the exploration node in our exploration tree.

\begin{definition}[Exploration Node on PDNet]\label{Def:ENode}
Let $\psi$ be an LTL-$_\mathcal{X}$ formula, $\mathcal{N}_{\neg\psi}$ the Büchi PDNet product, $Unf_N$ the unfolding of $\mathcal{N}_{\neg\psi}$, and $\hbar$ the current prefix of $Unf_N$.
An exploration node is a tuple $n::=(\epsilon, \kappa, \eta, t)$, where 
 
(1) $\epsilon$ is a finite set of events (configuration) of $\hbar$, 

(2) $\kappa$ is a finite set of transitions (delayed set) of $N$, 

(3) $\eta$ is a finite set of transitions (guide set) of $N$, 

(4) $t$ is an enabled transition under $Mark(\epsilon)$.
\end{definition}

In fact, $\epsilon$ is a configuration of $\hbar$ satisfying Definition \ref{Def:CoCut} that stores the information about the current partial-order run. $t$ must be enabled under $Mark(\epsilon)$ and we can extend $t$-labeled event(s) in $\hbar$.
Unlike the disabled set of events \cite{Rodriguez2015Unfolding}, the delayed set $\kappa$ is identified as conflict transitions. 
The intuition is that the transitions in $\kappa$ are delayed to explore the partial-order run of such later-fired transitions.
Therefore, updating $\kappa$ becomes a difficulty in constructing an exploration tree. 
This is based on the guide set $\eta$. $\eta$ is precisely used to guide the exploration of the next partial-order run. Intuitively, if there is a delayed transition of $t$, the prioritized and causal transitions need to be added to $\eta$. 
If a transition in $\eta$ has occurred as an extended event to the unfolding structure, the occurred transition is removed from $\eta$. At the same time, if all transitions associated with a delayed transition have occurred, the transition is removed from $\kappa$.

Based on our exploration node, the key insight of our exploration tree is as follows. 
Each node $n$ contains at most two successor nodes. 
The left child node of $n$ explores all partial-order runs, including $\epsilon\cup \{e\}$ where $\lambda(e)=t$.
And the right child node of $n$ explores the partial-order runs that $t$ is delayed to occur if there exists another transition that conflicts with it based on Definition \ref{Def:Conf}. 
In fact, the exploration tree explores all possible maximal configurations. 
These maximal configurations correspond to all possible partial-order runs. Thus, the exploration tree guides precisely the on-the-fly unfolding with a minimal number of paths.

Above all, the tree-like structure from existing methods \cite{Rodriguez2015Unfolding,Coti2021Quasi} are inherently different from the Definition \ref{Def:ENode} in this paper. 
Without complete conflict and causal relations, an enabled transition is generated using the current prefix. We extend events to the unfolding structure while constructing the exploration nodes. 
Based on this insight, our algorithm simultaneously constructs our exploration tree and generates our unfolding structure while checking the property.

\subsection{On-the-fly Unfolding Algorithm}\label{Sub:POC}

It is unnecessary to construct a complete unfolding structure in advance and then check a counterexample for LTL \cite{Dietsch2021Verification,Xiang2023Checking}. 
We propose an on-the-fly unfolding method using optimal exploration. The core is an exploration tree adapted to the on-the-fly unfolding for LTL in this paper. Optimal exploration implies that all partial-order runs are explored only once.
A partial-order run of $\mathcal{P}$ represents a set of executions without enforcing an order on independent actions \cite{Coti2021Quasi}.

Based on the above insights, we propose an algorithm in which an exploration tree is constructed while unfolding $\mathcal{N}_{\neg\psi}$ to verify the LTL properties of concurrent programs.
The information to construct the exploration nodes and extend new left and right child nodes for the continuous exploration is from the current prefix of $\mathcal{N}_{\neg\psi}$, and the guidance to extend the next event forming a full partial-order run or identify the next partial-order run is from our exploration tree. The difficulty is updating the delayed set in new nodes using the incomplete prefix. 
We explore all partial-order runs of $\mathcal{P}$ in Algorithm \ref{Alg:UPOC} and check illegal traces in Algorithm \ref{Alg:Cutoff}. 

\begin{breakablealgorithm}
\caption{On-the-fly Unfolding with optimal exploration}\label{Alg:UPOC}
		\begin{algorithmic}[1]
			\Require The Büchi PDNet product $\mathcal{N}_{\neg\psi}$ of $N$ and $\psi$;
			\Ensure $false$ or $true$;
			\State $result := true$;
			\State $\hbar := Min(O)$;
			\State $n_0 :=$ \Call {NOD}{$\emptyset,\emptyset,\emptyset,\bot$};
			\State \Call {EXPLORE}{$n_0$};
			\Procedure {EXPLORE} {$n$}
			\State $n := \langle \epsilon,\kappa,\eta,t\rangle$;
			\If {$\kappa\cap Conf(t)\neq\emptyset \wedge Fired(t)$}
			\State $\kappa:=\kappa\setminus \{t'\mid \forall t'\in\kappa\colon t\#t'\}$;
			\EndIf
			\If {$EXI(\epsilon)$} \Return
			\EndIf
			
			\State $EN(\epsilon) :=$ \Call{EXTEND}{$\epsilon$};
                \State $TEN(\epsilon) := \lambda(EN(\epsilon))$;
			
			\If {$\eta\neq\emptyset$}
			\State select a $t'$ from $TEN(\epsilon)\cap\eta$;
			
			\Else
			\State select a $t'$ from $TEN(\epsilon)\setminus\kappa$;
			\EndIf
			
			\If {$e$$:=$\Call{AddEvent}{$t,\hbar$}} \State $LChild:=true$;
			\If {\Call{CUTOFF}{$e,\hbar$}} 
                    \Return
			\EndIf
			\EndIf
			
			\If {$LChild$}
			\State $n':=$ \Call {NOD}{$\epsilon\cup\{e\},\kappa,\eta,t'$};
			\State \Call{EXPLORE}{$n'$}; 
			\EndIf
			\State $ConJ :=$ \Call {ALT}{$\epsilon,\kappa\cup\{t\},\eta$}; 
			\If {$ConJ\neq\emptyset$}
			\State $n'':=$ \Call {NOD}{$\epsilon,\kappa\cup\{t\},ConJ\setminus \epsilon$};
			\State \Call{EXPLORE}{$n''$}; 
			\EndIf
			\EndProcedure
			
			\Function {ALT}{$\epsilon,\kappa,\eta$}
			\ForAll {$t\in \kappa\cap TEN(\epsilon)$}
			\If {$t'\in Conf(t) \wedge (t'\notin\eta) \wedge (t'\notin\kappa)$} 
			\State \Return  $\eta\cup\lceil t'\rceil$; 
			\Else
			\State \Return $\emptyset$;
			\EndIf
			\EndFor
			\EndFunction
			
			
			
			\Function {EXTEND} {$\epsilon$}
			\State $pe :=$ \Call{PE}{$\epsilon,\mathcal{N}_{\neg\psi}$}; /$\ast$ Function $PE$ calculate the possible extensions$\ast$/
			\State \Return $pe$;
			\EndFunction
		\end{algorithmic}
\end{breakablealgorithm}

Our exploration tree and complete prefix of $\mathcal{N}_{\neg\psi}$ are constructed using a fixed-point loop until the tree cannot be expanded.
$\hbar$ represent the current prefix for $\mathcal{N}_{\neg\psi}$, with $Min(O)$ contains minimal conditions.
The initial node $n_0$ is constructed by the root node with an empty transition symbol $\bot$ of the exploration tree (Line 3 in Algorithm \ref{Alg:UPOC}). 
Then, the procedure EXPLORE is recursively called to build our exploration tree.
In procedure EXPLORE, $\langle\epsilon,\kappa,\eta,t\rangle$ represents a node in the exploration tree (Line 6). 

Firstly, it is necessary to check whether the delayed set $\kappa$ needs to be updated.
Function $Conf(t)$ can check if there is a conflict transition $t'$ in $\kappa$ (Line 7). 
If the last transition of $\kappa$ has occurred detected by $Fired(t)$, $t'$ should be deleted from $\kappa$ (Line 8).
Obviously, if $\epsilon$ is a duplicate configuration by $EXI(\epsilon)$, $n$ will not be explored (Line 9).
$EN(\epsilon)$ stores all possible extensions under $\epsilon$ (Line 10), while $TEN(\epsilon)$ is the enabled transition set by $EN(\epsilon)$ (Line 11). 
There are two cases to select an enabled transition. 
(\romannumeral1) $\eta$ is not empty (Line 12). We select an enabled transition from $\eta$ (Line 13).
(\romannumeral2) Otherwise, we select an enabled transition not in $\kappa$ (Line 15).
At this time, a $t$-labelled event $e$ is extended to the prefix $\hbar$ by $AddEvent(t,\hbar)$ (Line 16).
This implies $n$ has a left child node (Line 17). If $e$ is a cutoff event, $\epsilon$ is cut off (Line 18).

Secondly, a left or right child node is added to generate all possible partial-order runs. 
(\romannumeral1) $n'$ is the left child node, denoted by $n\triangleright_l n'$, which explores all partial-order runs that include $\epsilon\cup\{e\}$, and the transitions in $\kappa$ are delayed to occur (Lines 19-21). 
(\romannumeral2) If $ConJ=\emptyset$, the right child node of $n$ is empty (Line 22). Otherwise, $n''$ is the right child node, denoted by $n\triangleright_r n''$, which explores those partial-order runs to occur a transition $t'$ that conflicts with $t$ by $\kappa\cup\{t\}$ (Lines 23-25). 

Importantly, function ALT returns a set of transitions that witnesses the existence of some partial-order runs that extend $\epsilon$ when delaying some transitions from $\kappa\cup\{t\}$.
It selects a transition from $\kappa$ and $TEN(\epsilon)$ (Line 27), and searches for a transition $t'$ in conflict with $t$ (Line 28).
For such a transition $t'$, some causal transitions $\lceil t'\rceil$ need to occur before $t'$ are added to $\eta$ for the right child node (Line 29). 
Otherwise, it returns $\emptyset$ (Line 31). 
Function EXTEND calculates all possible extensions under the configuration $\epsilon$ (Line 33), 
When the $t$-labeled event $e$ is extended to $\hbar$ (Line 16), function CUTOFF in Algorithm \ref{Alg:Cutoff} is called to check whether it is a cutoff event (Line 18).
Then, we explain CUTOFF to check illegal traces.

\begin{breakablealgorithm}
\caption{Function CUTOFF}\label{Alg:Cutoff}
\begin{algorithmic}[1]
    \ForAll {$e'\in\hbar\wedge Mark([e])=Mark([e'])$} 
        \If {$e_f\notin [e]$}
            \If {$e'<e$}
                \If {$[e]\setminus[e']\cap T_I\neq \emptyset$}
                    \State $result:=false$;
                    \State\Return $true$;
                \ElsIf {$\mid[e']\cap T_I\mid\geq\mid[e]\cap T_I\mid$} 
                    \State\Return $true$;
                \EndIf
            \EndIf
        \Else
            \State $M:=Mark([e]\setminus \{e\})$;
            \State Let $M=(q,P_B,O,H)$;
            \If {$\mathcal{A}_{\neg\psi}^q \models O^\omega$}
                \State $result:=false$;
            \EndIf
            \State \Return $true$;
        \EndIf
    \EndFor
\end{algorithmic}
\end{breakablealgorithm}

As mentioned above, cutoff events prevent the exploration of some redundant configurations to ensure that $\hbar$ covers all partial-order runs and checks whether there is an illegal trace.
The function CUTOFF returns $true$ if the event $e$ is a cutoff event.
Firstly, the companion $e'$ of $e$ satisfies $Mark([e])=Mark([e'])$, implying that $[e']$$\prec$$[e]$ holds.
According to Definition \ref{Def:Cut}, there are two cases of illegal traces. 

(1) If there is no $t_f$-labelled event in $[e]$, 
we check whether an illegal infinite-trace exists. 
If $\exists t\in T_I$ belongs to $[e]\setminus [e']$ (Line 4 in Algorithm \ref{Alg:Cutoff}), $e$ is cutoff-\uppercase\expandafter{\romannumeral1}, and $result=false$ (Line 5). This implies there exists an illegal infinite-trace by Theorem \ref{The:IllTra}.
If $e'<e$ $\wedge$ $\mid$$[e']\cap T_I$$\mid$ $\geq$ $\mid$$[e]\cap T_I$$\mid$, $e$ is a unsuccessful cutoff (Line 8).

(2) Otherwise ($t_f$-labelled event in $[e]$), we check whether an illegal livelock exists. 
$\mathcal{A}^q_{\neg\psi}$ represent a Büchi automaton with $q$ as the initial state. 
If $\mathcal{A}_{\neg\psi}^q \models O^\omega$ means $M$ is a monitor marking (Line 12), $e$ is cutoff-\uppercase\expandafter{\romannumeral2}, and $result=false$ (Line 13).
This implies there exists an illegal livelock by Theorem \ref{The:IllTra}.

\begin{figure}[t]\centering
\includegraphics[width=0.46\textwidth]{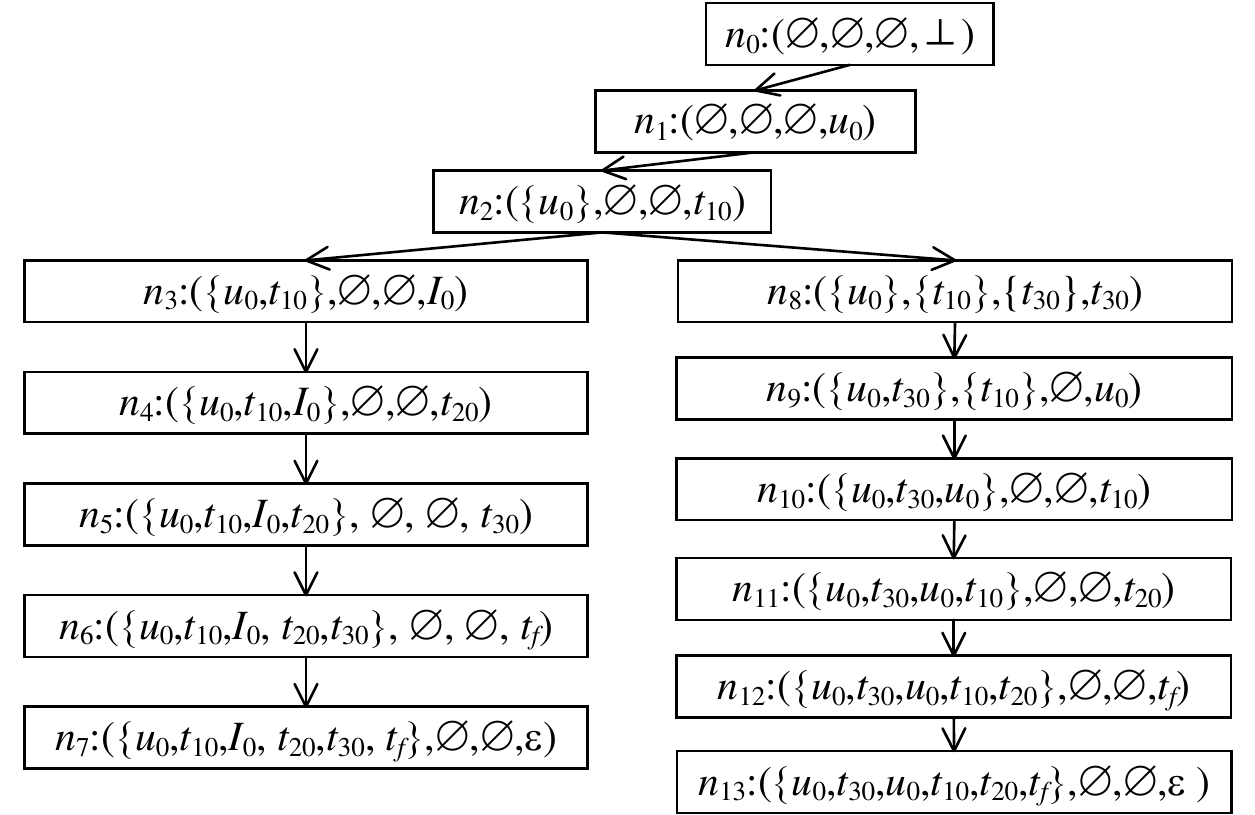}
\caption{The Exploration Tree.}\label{Fg:ExplorationTree}
\end{figure}

As we can see from our exploration tree shown in Figure \ref{Fg:ExplorationTree}, $t_{10}$ is selected to occur from $n_2$. Its right child node explores a distinct partial-order run in which $t_{10}$ is delayed. Because $t_{30}$ conflicts with $t_{10}$ according to Definition \ref{Def:Conf},
$\{t_{30}\}$ includes all causal transitions of $t_{30}$ to guide the next extension. 
Once $t_{30}$ occurs, $t_{10}$ is deleted from $\kappa$.
In fact, some actions in concurrent threads referencing the variables are modeled by the bidirectional arcs. Especially during the synchronization, the Büchi transitions observe places via the bidirectional arcs. Transitions referencing the same variable place are identified as conflict transitions without considering the concurrency of referencing actions on the same variables. 
However, when such two transitions are enabled simultaneously, their occurrence is interleaved in $Unf_{N}$ according to the traditional unfolding semantics. The resulting structure is larger in scale. This is because such conflicts lead to unnecessary interleaving of events in the unfolding structure. We also avoid such redundant structures.


\section{Experimental Evaluation}\label{Sec:Exp}

\subsection{Tool Implementation and Benchmarks}\label{Sub:Tool}

We implement an automatic tool named \textit{PUPER} (PDNet Unfolding and Partial-order explorER). 
It supports a restricted syntax of C language and POSIX threads. 
\textit{PUPER} can input a concurrent program with its LTL-$_\mathcal{X}$. 
The negation for LTL-$_\mathcal{X}$ formula is translated to a Büchi automaton and then translated to a Büchi PDNet.
Our PDNet synchronizes with the Büchi PDNet to yield a Büchi PDNet product.
The Büchi PDNet product is unfolded by constructing an exploration tree, implemented by a variant of Algorithm \ref{Alg:UPOC}.
As for checking whether the concurrent program violates the LTL-$_\mathcal{X}$ formula, it checks whether the current partial-order run has an illegal trace by checking a successful cutoff event, implemented by a variant of Algorithm \ref{Alg:Cutoff}. 

\begin{table}[h]\centering
\caption{Basic demographics of Benchmarks and Formulae. Columns are: LoC: nr. of Lines; $N_T$: nr. of threads; $\psi$: nr. of formulae; Property: concrete form; Res: Known results of each formula.}\label{Tab:Bench}
\scalebox{0.85}{
\begin{tabular}{lccclc}
\toprule[1pt]
\multicolumn{3}{c}{Benchmarks}                                            & \multicolumn{3}{c}{Formulae}                                        \\ \midrule
\multicolumn{1}{c}{Program}                    & LoC                 & $N_T$             & $\psi$   & \multicolumn{1}{c}{Property}                   & Res \\ \midrule
\multirow{1}{*}{Concur10}       & \multirow{1}{*}{92} & \multirow{1}{*}{10} & $\psi_1$ & $\mathcal{G} \neg error()$           & $F$    \\
\multirow{1}{*}{Concur20}       & \multirow{1}{*}{164} & \multirow{1}{*}{20} & $\psi_2$ & $\mathcal{G} \neg error()$           & $T$           \\     
\multirow{1}{*}{Concur30}       & \multirow{1}{*}{236} & \multirow{1}{*}{30} & $\psi_3$ & $\mathcal{G} \neg error()$           & $T$           \\  

\multirow{1}{*}{Ccnf9}      & \multirow{1}{*}{37} & \multirow{1}{*}{9} & $\psi_4$ & $\mathcal{G} \neg error()$           & $F$          \\
\multirow{1}{*}{Ccnf17}      & \multirow{1}{*}{56} & \multirow{1}{*}{17} & $\psi_5$ & $\mathcal{G} \neg error()$      & $F$           \\
\multirow{1}{*}{Ccnf19}      & \multirow{1}{*}{61} & \multirow{1}{*}{19} & $\psi_6$ & $\mathcal{G} \neg error()$      & $F$           \\

\multirow{1}{*}{Shared20}   & \multirow{1}{*}{44} & \multirow{1}{*}{20} & $\psi_7$ & $\mathcal{G} (shared<15)$           & $F$           \\
\multirow{1}{*}{Shared30}   & \multirow{1}{*}{44} & \multirow{1}{*}{30} & $\psi_8$ & $\mathcal{G}(shared<15)$           & $T$           \\
\multirow{1}{*}{Shared40}   & \multirow{1}{*}{44} & \multirow{1}{*}{40} & $\psi_9$ & $\mathcal{G} (shared<15)$           & $F$           \\
\multirow{1}{*}{Shared50}   & \multirow{1}{*}{44} & \multirow{1}{*}{50} & $\psi_{10}$ & $\mathcal{G} (shared<100)$           & $T$           \\
                           
\multirow{1}{*}{SSB10}    & \multirow{1}{*}{63} & \multirow{1}{*}{10} & $\psi_{11}$ & $\mathcal{G} ((x<2) \rightarrow \mathcal{F} (y<2))$           & $F$           \\
\multirow{1}{*}{SSB20}    & \multirow{1}{*}{63} & \multirow{1}{*}{20} & $\psi_{12}$ & $\mathcal{G} ((x<2) \rightarrow \mathcal{F} (y<2))$           & $F$           \\
\multirow{1}{*}{SSB30}    & \multirow{1}{*}{63} & \multirow{1}{*}{30} & $\psi_{13}$ & $\mathcal{G} ((x<2) \rightarrow \mathcal{F} (y<2))$           & $F$           \\

\multirow{1}{*}{Dekker}    & \multirow{1}{*}{58} & \multirow{1}{*}{2} & $\psi_{14}$ & $\mathcal{G} (flag1=1\vee flag1=0)$           & $T$           \\
                           
\multirow{1}{*}{Lamport}   & \multirow{1}{*}{81} & \multirow{1}{*}{2} & $\psi_{15}$ & $\mathcal{G} (x=0\vee x=1)$          & $F$           \\
                           
\multirow{1}{*}{Peterson}  & \multirow{1}{*}{46} & \multirow{1}{*}{2} & $\psi_{16}$ & $\mathcal{G} (flag1=1\vee flag1=0)$     & $T$           \\

\multirow{1}{*}{Szymanski} & \multirow{1}{*}{63} & \multirow{1}{*}{2} & $\psi_{17}$ & $\mathcal{G} (flag1=0 \vee flag1=1)$ & $F$           \\

\bottomrule[1pt]
\end{tabular}
}
\end{table}

Benchmarks, as outlined in Table \ref{Tab:Bench}, are adapted from the Software Verification Competition \cite{SV}, and others are used in related works \cite{Rodriguez2015Unfolding}. 
$Concur(N_T)$, $Ccnf(N_T)$, $Shared(N_T)$ and $SSB(N_T)$ \cite{Rodriguez2015Unfolding} exhibits high concurrency, where $N_T$ is the number of concurrent threads.
\textit{Lamport}, \textit{Dekker}, \textit{Szymanski}, and \textit{Peterson} \cite{SV} are four classical algorithms for solving the mutual exclusion problem in concurrent systems. 
At the same time, we set some LTL formulae for each benchmark, whose propositions specify some constraints relevant to the key variables or some specific locations. Both our source code and benchmarks used in the evaluation are available\footnote{\url{https://github.com/shirleylee03/puper/}\label{tool}}.
The experiments are conducted under the limitation of $16GB$ memory and $300s$ time. And all reported time are in seconds. 
The whole process of each verified property for each benchmark is called execution.

\subsection{Subject and Methodology}\label{Sub:Bench}

The core contribution of this paper is that we propose an on-the-fly unfolding with optimal exploration method for LTL-$_\mathcal{X}$ to avoid the enumeration of possible concurrent combinations. 
One sight is that the traditional unfolding method is used to automatically generate the unfolding structures of PNs rather than a concurrent program. Thus, we do not contrast the tools in these existing papers (Punf \cite{Esparza2001Implementing}).
Although there are unfolding-based partial-order reduction tools (POET \cite{Rodriguez2015Unfolding}) for concurrent programs, they are not used for LTL properties. 
The other insight is that we take the traditional unfolding generation as the baseline method. Here, the standard unfolding method systematically and exhaustively enumerates sets of conditions and explores all coverable sets. In this paper, we implement it with the same LTL properties.

We compare it with the traditional unfolding generation \cite{Khomenko2004Parallel}. 
This unfolding method starts with initial conditions and enumerates all combinations of conditions that can form a configuration. This process is exponential, which is the most computationally expensive step. It then determines which combinations have possible extensions and selects a possible extension to add a new event. 
It terminates until no further new event can be extended or a counterexample path that violates the LTL formula is found.
We should also compare our tool with the state-of-the-art LTL tool \textit{SPIN} and \textit{DiVine} to evaluate our performance on LTL-$_\mathcal{X}$ checking. 
Although \textit{LTSMin} can verify LTL formulae, it is not directly applicable to concurrent programs. Therefore, we do not contrast it.
Based on the above insights, we present experimental evaluations to answer the following questions to highlight two aspects:

$Q1$. How does our PDNet unfolding generation compare to the traditional unfolding generation?

$Q2$. How does our LTL-$_\mathcal{X}$ checking method compare to the existing tools?

\subsection{Comparison with Traditional Unfolding Generations}\label{Sub:ComTri}

\begin{table*}[t]\centering
\caption{Comparison with \textit{baseline} methods (in seconds). Columns are: $N_T$: the concurrent threads, $R_P$: the correct result, $|P|$: nr. of places, $|T|$: nr. of transition, $Q_{unf}$: nr. of conditions, $E_{unf}$: nr. of events, $C_{unf}$: nr. of concurrent event combinations, $T_{unf}$: the checking time of traditional unfolding generations, $Q_{pup}$: nr. of conditions, $E_{pup}$: nr. of events, $N_{pup}$: nr. of tree nodes, and $T_{pup}$: the checking time. }\label{Tab:ComTra}
\scalebox{0.8}{
\begin{tabular}{lrcrrrrrrrrrrr}
\toprule[1pt]
\multicolumn{3}{c}{Benchmarks} & \multicolumn{2}{c}{PDNet} & \multicolumn{4}{c}{\textit{baseline}}                     & \multicolumn{4}{c}{$PUPER$}                     & Comparison \\ \midrule
Name        & $N_T$   & $R_P$  & $|P|$       & $|T|$       & $Q_{unf}$ & $E_{unf}$ & $C_{unf}$ & $T_{unf}$ & $Q_{pup}$ & $E_{pup}$ & $N_{pup}$ & $T_{pup}$ & $\nabla T$ \\ \midrule
Concur10    & 10      & $F$    & 133         & 60          & 210       & 24        & 24        & 0.018     & 258       & 51        & 52        & 0.04      & 0.45       \\
Concur20    & 20      & $T$    & 248         & 106         & 1105      & 275       & 102963    & TO        & 1745      & 813       & 33565     & 64.728    & 4.64       \\
Concur30    & 30      & $T$    & 348         & 146         & 1541      & 213       & 48857     & TO        & 2576      & 975       & 52751     & 209.134   & 1.65       \\
Ccnf9       & 9       & $F$    & 91          & 41          & 147       & 22        & 22        & 0.014     & 171       & 35        & 36        & 0.029     & 0.48       \\
Ccnf17      & 17      & $F$    & 151         & 65          & 384       & 30        & 30        & 0.054     & 439       & 59        & 60        & 0.073     & 0.74       \\
Ccnf19      & 19      & $F$    & 166         & 71          & 463       & 32        & 32        & 0.065     & 526       & 65        & 66        & 0.096     & 0.68       \\
Shared20    & 20      & $F$    & 84          & 45          & 14211     & 5609      & 43290     & TO        & 480       & 128       & 129       & 0.125     & 2401.10    \\
Shared30    & 30      & $T$    & 84          & 45          & 12056     & 4713      & 42304     & TO        & 2791      & 1031      & 1452      & 3.732     & 81.54      \\
Shared40    & 40      & $F$    & 84          & 45          & 12216     & 4743      & 42334     & TO        & 920       & 248       & 249       & 0.344     & 939.96     \\
Shared50    & 50      & $T$    & 84          & 45          & 12376     & 4773      & 42364     & TO        & 7183      & 2732      & 3943      & 33.26     & 12.16      \\
SSB10       & 10      & $F$    & 74          & 37          & 467       & 143       & 195       & 0.063     & 270       & 92        & 93        & 0.048     & 1.31       \\
SSB20       & 20      & $F$    & 74          & 37          & 76        & 15        & 18        & 0.02      & 460       & 162       & 163       & 0.11      & 0.14       \\
SSB30       & 30      & $F$    & 74          & 37          & 16        & 2         & 2         & 0.011     & 650       & 232       & 233       & 0.168     & 0.07       \\
Dekker      & 2       & $T$    & 119         & 66          & 260       & 100       & 224       & 0.07      & 174       & 75        & 140       & 0.05      & 1.40       \\
Lamport     & 2       & $F$    & 156         & 91          & 50        & 10        & 10        & 0.026     & 104       & 37        & 38        & 0.018     & 1.44       \\
Peterson    & 2       & $T$    & 103         & 56          & 236       & 93        & 238       & 0.05      & 181       & 82        & 131       & 0.037     & 1.35       \\
Szymanski   & 2       & $F$    & 138         & 82          & 208       & 65        & 107       & 0.055     & 176       & 62        & 75        & 0.045     & 1.22       \\
\multicolumn{3}{c}{Average}    & 130.1       & 63.235      & 3295.412  & 1227.176  & 19000.82  & 116.3745  & 1123.765  & 404.6471  & 5480.941  & 18.35512  & 6.34
\\
\bottomrule[1pt]
\end{tabular}
}
\end{table*}

For $Q1$, we compare the performance of our method, abbreviated as $PUPER$, with the traditional unfolding generation (abbreviated as \textit{baseline}). 
To show the performance of $PUPER$ quantitatively, we calculate the reduction multiple by $\nabla T$$=$$T_{unf}/T_{pup}$ in Table \ref{Tab:ComTra}.
Specifically, TO means that the checking time exceeds the upper time limitation ($300s$) and does not give a deterministic result. 

As we can see from Table \ref{Tab:ComTra}, except for six executions from \textit{baseline} where the output is TO, all of them get the same verification results as $R_P$. 
This implies that our methods are correct. 
As for the checking time, $T_{unf}$$>$$T_{pup}$ holds for $11$ execution, while $T_{pup}$$>$$T_{unf}$ holds for the rest of executions. 
Although \textit{baseline} can reduce unnecessary interleaving compared with the reachability graph, it still needs to enumerate all possible concurrent event combinations ($C_{unf}$) during the calculation of the next event, which brings a high cost. 
If the number of event combinations $C_{unf}$ is large (reaching the scale of $10^4$), the checking results cannot be obtained within a limited time.

In $Concur$ (for the same LTL formula) benchmarks, as $N_T$ increases, the advantage of $PUPER$ is clear. \textit{Baseline} has timed out within $20$ threads, but $PUPER$ can output the results within $300s$ within $30$ threads. 

In $Shared$ (for the same LTL formula) benchmarks, 
\textit{baseline} has timed out due to the large-scale number of event combinations $C_{unf}$, but $PUPER$ can output the results within $300s$ within $50$ threads. 

In $Ccnf$ and $SSB$ (for some different LTL formulae) benchmarks, although $PUPER$ outperforms \textit{baseline} under $10$ threads of $SSB$, \textit{baseline} can also exhibit good performance under other benchmarks. 
The reason is that $C_{unf}$ of \textit{baseline} is small (not reaching the scale of $10^4$), without bringing huge costs to check the first illegal trace.

In the four classical mutual exclusion algorithms with only $2$ threads, \textit{baseline} can also show good performance. Although the construction of the exploration tree brings additional overhead, the checking time $T_{unf}$ and $T_{pup}$ are on the same order of magnitude. The overhead is still acceptable.

In conclusion, $PUPER$ proposed in this paper has a great optimization effect in mitigating the NP-complete problem by avoiding enumerating
possible concurrent event combinations. TO means that no result was obtained within the limited $300s$. These cases in $Concur$ and $Shared$ reflect too many possible combinations. Traditional unfolding methods traverse all possible combinations when calculating possible extensions. This may reflect the effectiveness of our method. 
Even in the case where there are not so many combinations from \textit{baseline}, $PUPER$ also does not bring too much overhead.
$PUPER$ spends $6.34$ times less time than \textit{baseline} on average. 

\subsection{Comparison with LTL Tools}\label{Sub:ComOth}

\begin{table}[t]\centering
\caption{Comparison with \textit{SPIN} and \textit{DiVine} (in seconds). Columns are: $N_{div}$: nr. of explored nodes, $T_{div}$: time of \textit{DiVine}; $N_{spin}$: nr. of the explored nodes, $T_{spin}$: time of \textit{SPIN}; $N_{pup}$: nr. of the explored nodes, and $T_{pup}$: time of \textit{PUPER}.}\label{Tab:ComOth}
\scalebox{0.7}{
\begin{tabular}{crrrrrrrr}
\toprule[1pt]
\multirow{2}{*}{$N_T$} & \multicolumn{2}{c}{$DiVine$} & \multicolumn{2}{c}{$SPIN$} & \multicolumn{2}{c}{$PUPER$} & \multicolumn{2}{c}{Comparison} \\ \cmidrule{2-9}
& $N_{div}$     & $T_{div}$    & $N_{spin}$   & $T_{spin}$  & $N_{pup}$    & $T_{pup}$    & $\nabla d$     & $\nabla s$    \\ \midrule
3                      & 115           & 0.066        & 423          & 0           & 25           & 0.017        & 3.88           & 0.00          \\
4                      & 243           & 0.103        & 2348         & 0           & 30           & 0.024        & 4.29           & 0.00          \\
5                      & 499           & 0.242        & 13098        & 0.02        & 35           & 0.032        & 7.56           & 0.63          \\
6                      & 1011          & 0.372        & 72473        & 0.16        & 40           & 0.042        & 8.86           & 3.81          \\
7                      & 2035          & 0.704        & 397437       & 0.97        & 45           & 0.055        & 12.80          & 17.64         \\
8                      & 4083          & 1.356        & 2162169      & 6.32        & 50           & 0.067        & 20.24          & 94.33         \\
9                      & 8179          & 2.878        & 11666054     & 37.7        & 55           & 0.084        & 34.26          & 448.81        \\
10                     & 16371         & 6.708        & 62045088     & 226         & 60           & 0.095        & 70.61          & 2378.95       \\
Average                & 4067          & 1.55         & 9544886.25   & 33.90       & 42.50        & 0.05         & 20.31          & 368.02
\\
\bottomrule[1pt]
\end{tabular}
}
\end{table}

Our \textit{PUPER} can verify whether the concurrent program satisfies the LTL-$_\mathcal{X}$ formula.
For $Q2$, since the tools in \cite{Rodriguez2015Unfolding,Coti2021Quasi,Schemmel2020Symbolic} are designed for safety properties instead of LTL properties, we compare \textit{PUPER} with \textit{DiVine} \cite{Barnat2013DiVinE} and \textit{SPIN} \cite{Holzmann2002An}  two state-of-the-art tools for concurrent programs on LTL properties. 
The benchmarks used here are $Concur$ with the thread number from $3$ to $10$, which exhibit high concurrency. Their results are all $T$ to compare the checking better in the case of complete nodes. 
They all use partial order reduction techniques to alleviate the state space explosion problem. Such techniques are also based on unconditional dependencies and can help reduce the number of paths that need to be explored. Therefore, they are well suited for comparison with the on-the-fly unfolding with optimal exploration method in this paper.

In Table \ref{Tab:ComOth}, $N_{div}$, $T_{div}$, $N_{spin}$ and $T_{spin}$ are output by themselves.
As we can see from Table \ref{Tab:ComOth}, the total time of our tool grows the slowest as the number of threads increases. 
Overall, the average of $T_{div}$ is $1.55s$, the average of $T_{spin}$ is $33.9s$, and the average of \textit{PUPER} of $T_{pup}$ is only  $0.05s$.
To show the effect of \textit{PUPER} quantitatively, we calculate the reduction multiple by $\nabla d$$=$$T_{div}/T_{pup}$ and $\nabla s$$=$$T_{spin}/T_{pup}$.
Thus, \textit{PUPER} spends $20.31$ times less time than \textit{DiVine} tool and $368.02$ times less time than \textit{SPIN} tool on the benchmarks. 

In summary, the fewer conflict transitions and the fewer branching paths the exploration tree needs to explore, the better our method performs.

\section{Related works}\label{Sec:Rel}

Net unfolding techniques \cite{Wallner1998Model,Esparza2001Implementing,Khomenko2004Parallel} can be employed to verify LTL properties of concurrent systems. Esparza et al. \cite{Esparza2001Implementing} propose a new unfolding approach that can solve the model-checking problem by directly checking a finite prefix without running complex algorithms on the prefix. Khomenko et al. \cite{Khomenko2004Parallel} propose an unfolding-based approach to LTL-$_\mathcal{X}$ model checking of high-level PNs based on \cite{Esparza2001Implementing} for low-level nets. These methods represent the traditional unfolding generation (\textit{baseline}) in this paper. 
And most unfolding-based model-checking methods should construct a complete structure for property verification in advance.

As for concurrent programs, verification methods \cite{Rodriguez2015Unfolding,Coti2021Quasi,Schemmel2020Symbolic} based on event structure (a simplified structure using unfolding semantics) combine the dynamic partial-order reduction \cite{Abdulla2014Optimal}. The event structure is constructed based on the unconditional independence of the execution model, and the optimal exploration can cover all reachable states. Some testing methods \cite{Kahkonen2014Lightweight,Kahkonen2017Testing,Saarikivi2017Minimizing} are combined with dynamic symbolic execution, where the entire execution paths of concurrent programs are directly expressed by the unfolding structure of plain PNs or contextual nets (a kind of PNs with read arcs) to explore the target states. 
Recently, the Verification of concurrent programs using Petri net unfolding \cite{Dietsch2021Verification} can verify safety properties within the framework of counterexample-guided abstraction and refinement. 
They propose new unfolding methods to solve state-explosion problems for concurrent programs. However, these methods are not applicable for LTL properties.

In the case of LTL of software model checking, Dietsch et al. \cite{Dietsch2015Fairness} propose a new approach to LTL properties for sequential programs based on fairness modulo theory. However, this method only applies to sequential programs, not the concurrent programs targeted in this paper.

\section{Conclusion}\label{Sec:Con}
The on-the-fly method for LTL model checking is not effective for concurrent interleaving. Constructing a complete unfolding structure in advance and then checking LTL is also wasteful. It is difficult to apply a tree-like structure directly to the traditional on-the-fly unfolding. 
We propose a novel method called on-the-fly unfolding with optimal exploration for LTL. The traditional conflict relations between transitions in PNs are insufficient to cover all the conflict actions in PDNet-modeled concurrent programs. Based on dependencies from PDNet, we propose a formal definition of conflict transition to identify conflict relations.
The concept of the disabled set cannot ensure the entire partial-order runs, so we define an exploration tree with a novel notion of a delayed set of transitions. Then, we propose an LTL-$_\mathcal{X}$ algorithm to simultaneously generate the unfolding structure while constructing our exploration tree.
Finally, we develop a tool for concurrent programs and validate that our methods are compared with traditional unfolding methods and existing LTL tools. 

In the future, we will propose a merged process as a more compact unfolding structure to represent all partial-order runs.

\section*{Acknowledgments}\label{Sec:Ack}

This work was supported by National Key Research and Development Program of China under Grant 2022YFB4501704 and National Natural Science Foundation of China under Grant 62302306.



\printcredits

\bibliographystyle{cas-model2-names}
\bibliography{cas-refs-puper}

\newpage

\appendix 
\section*{Appendix}

\begin{subappendices}

\renewcommand{\thesection}{\Alph{section}}%

\section{PDNet Synchronization}\label{App:Syn}

\setcounter{table}{0}   
\setcounter{figure}{0}
\setcounter{equation}{0}
\setcounter{algorithm}{0}
\setcounter{definition}{0}
\setcounter{theorem}{0}
\setcounter{remark}{0}

\renewcommand{\thetable}{A.\arabic{table}}
\renewcommand{\thefigure}{A.\arabic{figure}}
\renewcommand{\theequation}{A.\arabic{equation}}
\renewcommand{\thealgorithm}{A.\arabic{algorithm}}
\renewcommand{\thedefinition}{A.\arabic{definition}}
\renewcommand{\thetheorem}{A.\arabic{theorem}}
\renewcommand{\theremark}{A.\arabic{remark}}

Inspired by the automata-theoretic approach with on-the-fly exploration \cite{He2021More}, the key to verifying LTL properties by unfolding is to synchronize PDNet with a Büchi automaton accepting the negation of LTL-$_\mathcal{X}$.
The model-checking problem is reduced to the emptiness checking of illegal traces by traditional unfolding generation \cite{Esparza2002An}.
We improve the synchronization \cite{Esparza2001Implementing,Khomenko2004Parallel} to PDNet.

Let $\psi$ be an LTL-$_\mathcal{X}$ formula. For any proposition $po$ in $\psi$, $[po]$ represents a boolean expression corresponding to this proposition. 
Firstly, a Büchi automaton $\mathcal{A}_{\neg\psi}$ accepting $\neg\psi$ is translated efficiently \cite{He2021More}. 
To yield a synchronization, $\mathcal{A}_{\neg\psi}$ should be translated into an equivalent PDNet.
It is a PDNet with acceptance capabilities called Büchi PDNet. 
Formally, Büchi PDNet is $N_\mathcal{A}::=(N,P_{A})$, where $N$ is a PDNet as Definition \ref{Def:PDNet} and $P_{A}\subseteq P$ is a finite set of acceptable places (corresponding to the acceptable state of the Büchi automaton).

Let $N_{\neg\psi}$ be a Büchi PDNet translated from $\mathcal{A}_{\neg\psi}$, such that

(1) Each state $q$ of $\mathcal{A}_{\neg\psi}$ is translated to a place $p_q$ (called Büchi place, denoted by $P_\mathcal{A}$), where only the place for the initial state $q_0$ has a token. 

(2) Each transition relation $(q,po,q')$ of $\mathcal{A}_{\neg\psi}$ is translated to a transition $t_x$ (called Büchi transition, denoted by $T_\mathcal{A}$), where $p_q$ and $p_{q'}$ are the input and output places of $t_x$, and $G(t_x)=[po]$ is the guard expression for $t_x$. 

(3) Each acceptable state of $\mathcal{A}_{\neg\psi}$ is translated to an acceptable place $p_a$ belonging to $P_{A}$.

For any word on a run $\varpi$ accepted by $\mathcal{A}_{\neg\psi}$, there exists an occurrence sequence $\pi$ (defined in \cite{2023arXiv2301Ding}) corresponding to $\varpi$ accepted by $N_{\neg\psi}$. 
An occurrence sequence (run) $\pi$ is accepted by $N_\mathcal{A}$ if $\exists t\in ^\bullet$$P_{A}$ appears infinitely often in $\pi$.
We define a distinguished subset of Büchi transitions for $P_{A}$ (called $I$-transition), by $T_I ::= \{t\mid \forall t\in T_\mathcal{A}\colon t$$^\bullet \cap P_{A}\neq\emptyset\}$.
Thus, the acceptable occurrence sequence occurs some $I$-transitions infinitely often. 
For synchronization, Büchi transition $t\in T_\mathcal{A}$ should observe some necessary places in $N$ to evaluate the truth of guard expression $G(t)$. We define them as observable places of PDNet from LTL-$_\mathcal{X}$.

\begin{definition}[Observable Place of PDNet]\label{Def:Obs}
Let $\psi$ be a LTL-$_\mathcal{X}$ of $N$, $Po$ all propositions from $\psi$, and $po\in Po$ a proposition.
	
If a proposition $po$ is in the form of $is-fireable(t)$, $Obs([po])::=\{p\mid p\in ^\bullet$$t\cap P_f \vee (p\in ^\bullet$$t\cap P_v \wedge Var(G(t))\neq\emptyset)\}$. 	
If a proposition $po$ is in the form of $token-value(p)\emph{ }rop\emph{ }c$, $Obs([po])::=\{p\}$.
The set of observable places over $\psi$ is $P_{Obs} ::= \{p\mid \forall po\in Po\colon p\in Obs([po])\}$.
\end{definition}

Thus, an observable place is either an execution place from $P_f$ or a variable place from $P_v$.
Initially, a standard approach synchronizes $N$ with $N_{\neg\psi}$ on all transitions \cite{Wallner1998Model}.
However, such a synchronization shall disable all concurrency.
It contradicts the essence of unfolding, exploiting the concurrency to generate a compact representation of the state space. 
Thus, we define visible transitions of PDNet. Intuitively, such transitions change the making of some observable places.

\begin{definition}[Visible Transition of PDNet]\label{Def:Vis}
Let $\psi$ be an LTL-$_\mathcal{X}$ formula w.r.t. $N$, and $P_{Obs}$ the set of observation places w.r.t. $\psi$. 
The set of visible transitions is
$T_{Vis} ::= \{t \mid\forall t\in T,\forall p\in P_{Obs} \cap P_v\colon t\in ^\bullet$$p \wedge E(p,t) \neq E(t,p) \vee \forall p\in P_{Obs}\cap P_f\colon  t \in ^\bullet$$p \cup p$$^\bullet\}$.
\end{definition}

Obviously, $T_{Inv}=T\setminus T_{Vis}$ is called invisible transitions. 
As the example in Figure \ref{Fg:Motivation}($i$), a Büchi PDNet is translated from the Büchi automaton in Figure \ref{Fg:Motivation}($c$). 
Here, $u_0$ corresponds to $(q_0,true,q_0)$, $I_0$ corresponds to $(q_0,(x=1)\cap (z\neq 1),q_1)$, $I_1$ corresponds to $(q_1,(z\neq 1),q_1)$. 
Thus, $G(I_0)$ is $[v_x =1\cap v_z\neq 1]$, and $G(I_1)$ is $[v_z\neq 1]$. 
Then, $T_I=\{I_0,I_1\}$, $P_{Obs}=\{v_x,v_z\}$,  $T_{Vis}=\{t_{10},t_{30}\}$ and $T_{Inv}=\{t_{20}\}$.

Based on the above definitions, we propose a synchronization of PDNet (called Büchi PDNet product) in Algorithm \ref{Alg:Syn}.
Let $N$ be a PDNet, $\psi$ be an LTL-$_\mathcal{X}$ of $N$, and $N_{\neg\psi}$ be a Büchi PDNet of $\neg\psi$. $\mathcal{N}_{\neg\psi}$ represents a Büchi PDNet product of $N$ and $N_{\neg\psi}$.
Intuitively, $\mathcal{N}_{\neg\psi}$ can be seen as a Büchi PDNet by placing $N$ in a suitable environment $N_{\neg\psi}$.
$P_{A}$ and $T_I$ of $\mathcal{N}_{\neg\psi}$ are the same as $N_{\neg\psi}$.
Similarly, a run accepted by $\mathcal{N}_{\neg\psi}$ enabled from the initial marking occurs some $I$-transitions infinitely often.
The existence of such acceptable run (called illegal infinite-trace) of $\mathcal{N}_{\neg\psi}$ implies that $N$ (the corresponding concurrent program $\mathcal{P}$) violates $\psi$.

\begin{breakablealgorithm}
	\caption{Synchronization Algorithm}\label{Alg:Syn}
		\begin{algorithmic}[1]
			\Require A PDNet $N$ and an LTL-$_\mathcal{X}$ formula $\psi$;
			\Ensure Büchi PDNet Product $\mathcal{N}_{\neg\psi}$;
			\State $T_I := \emptyset$;
			\State $P:=P\cup \{P_B, P_S\}$; /$\ast I(P_B)\neq\emptyset\ast$/
			\ForAll {$t\in T_\mathcal{A}$} 
			\If {$G(t) \neq\emptyset$} 
			\ForAll {$p\in P_{Obs}\cap Obs(G(t))$} 
			\State $F:= F\cup\{(t,p),(p,t)\}$;
			\EndFor
			\EndIf
			\If{$t$$^\bullet\cap P_A\neq\emptyset$}
			\State $T_I := T_I \cup \{t\}$;
			\EndIf
			\State $F:=F\cup\{(P_B, t),(t, P_S)\}$;
			\EndFor
			
			\ForAll {$t\in T_{Vis}$} 
			\State $F:=F\cup\{(P_S, t),(t, P_B)\}$; 
			\EndFor
			\State Let $t_f$ be a new transition; 
			\ForAll {$p \in P \wedge p^\bullet =\emptyset$}
			\State $F:=F\cup\{(p, t_f),(t_f, p)\}$;
			\EndFor
			
		\end{algorithmic}
\end{breakablealgorithm}

In order to make the alternate execution of $N$ and $N_{\neg\psi}$, the scheduling places $P_B$ and $P_S$ are added to $\mathcal{N}_{\neg\psi}$.
The intuition behind these places is that when $P_B$ ($P_S$) has a token, it is the turn of $N_{\neg\psi}$ ($N$) to occur an enabled transition.
Because Büchi PDNet $N_{\neg\psi}$ needs to observe the initial marking of $N$, only $P_B$ has a token initially (Line 2 in Algorithm \ref{Alg:Syn}).
Intuitively, Büchi transitions of $\mathcal{N}_{\neg\psi}$ should touch the observable places. Based on Definition \ref{Def:Obs}, the observed places are determined by $Obs(G(t))$, where the guard expression $G(t)$ corresponds to a proposition (Line 5).
Then, $p$ is observed by $t$ via bidirectional arcs (Line 6).
At the same time, $I$-transitions are identified by $T_I$ (Lines 7-8).
The visible transitions are synchronized with Büchi transitions by scheduling places.
Here, Büchi transition in $T_\mathcal{A}$ is scheduled by the two arcs (Line 9), and visible transitions in $T_{Vis}$ are scheduled by the two arcs (Line 11).
Since LTL-$_\mathcal{X}$ model-checking is based on infinite sequences, we construct a transition $t_f$ to build the infinite sequences (Lines 13-14) by a self-loop.

As the example in Figure \ref{Fg:Motivation}($j$), we show how to construct the synchronization of $N_{\neg\psi}$ and $N$, where bolded arcs are added by Algorithm \ref{Alg:Syn}. 
For instance, $(v_x, I_0)$ and $(I_0,v_x)$ represent the observation of Büchi transition $I_0$. $(P_B,I_0)$ and $(I_0,P_S)$ represent the turn from $N_{\neg\psi}$ to $N$. 

\section{Proof}\label{App:Proof}

\setcounter{table}{0}   
\setcounter{figure}{0}
\setcounter{equation}{0}
\setcounter{algorithm}{0}
\setcounter{definition}{0}
\setcounter{theorem}{0}
\setcounter{remark}{0}

\renewcommand{\thetable}{B.\arabic{table}}
\renewcommand{\thefigure}{B.\arabic{figure}}
\renewcommand{\theequation}{B.\arabic{equation}}
\renewcommand{\thealgorithm}{B.\arabic{algorithm}}
\renewcommand{\thedefinition}{B.\arabic{definition}}
\renewcommand{\thetheorem}{B.\arabic{theorem}}
\renewcommand{\theremark}{B.\arabic{remark}}

To prove the correctness, we split Theorem \ref{The:IllTra} into some sub-theorems.

\begin{theorem}\label{The:cutoff1-1}
	Let $N$ be a PDNet, $\psi$ be an LTL-$_\mathcal{X}$ formula, $\mathcal{N}_{\neg\psi}$ be the Büchi PDNet product, and $\hbar$ be a prefix of $Unf_\mathcal{N}$.
	If $\hbar$ contains a successful cutoff-\uppercase\expandafter{\romannumeral1}, then $\mathcal{N}_{\neg\psi}$ has an illegal infinite-trace.
\end{theorem}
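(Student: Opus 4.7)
The plan is to construct an infinite occurrence sequence of $\mathcal{N}_{\neg\psi}$ that visits $I$-transitions infinitely often, using the marking equality $Mark([e])=Mark([e'])$ guaranteed by the cutoff condition together with the isomorphism provided by the adequate order.

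First, I would unpack the hypothesis. Since $e$ is a successful cutoff-\uppercase\expandafter{\romannumeral1}, Definition~\ref{Def:Cut} gives a companion $e'$ with $Mark([e])=Mark([e'])$ satisfying either (1) $e'<e$, so $[e']\subseteq[e]$ by causal closure and hence $[e']\prec[e]$ via property (2) of Definition~\ref{Def:AO}, or (2) $\neg(e'<e)\wedge[e']\prec[e]$ directly. In both cases we obtain $[e']\prec[e]$ together with the marking equality, so by Definition~\ref{Def:AO}(3) there exists an isomorphism $\theta$ mapping every finite extension $[e']\oplus\mathcal{E}$ to a finite extension $[e]\oplus\theta(\mathcal{E})$. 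I would then observe that since $\theta$ is a homomorphism of branching processes, it preserves the labelling $\lambda$, so in particular $t$-labelled events are mapped to $t$-labelled events; the fact that $[e]\setminus[e']$ contains at least one $I$-labelled event is preserved under $\theta$.

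Next I would iterate. Set $\mathcal{C}_0=[e']$ and $\mathcal{C}_1=[e]$, and let $\mathcal{E}_0=[e]\setminus[e']$, which by hypothesis contains an $I$-labelled event and satisfies $\mathcal{C}_0\oplus\mathcal{E}_0=\mathcal{C}_1$ in case (1), or $\mathcal{C}_1=[e']\oplus\theta(\mathcal{E}_0')$ for an appropriate $\mathcal{E}_0'$ in case (2) (a short argument from $[e']\prec[e]$ and equal markings). Inductively define $\mathcal{E}_{k}=\theta(\mathcal{E}_{k-1})$ and $\mathcal{C}_{k+1}=\mathcal{C}_k\oplus\mathcal{E}_{k}$; by the adequate-order clause each $\mathcal{C}_{k+1}$ is again a finite configuration with $Mark(\mathcal{C}_{k+1})=Mark(\mathcal{C}_k)$ and still contains at least one fresh $I$-labelled event (preservation of $\lambda$ under $\theta$). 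Setting $\mathcal{C}_\infty=\bigcup_k \mathcal{C}_k$ yields an infinite, causally closed, conflict-free set of events in $Unf_\mathcal{N}$ in which $I$-labelled events appear infinitely often.

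Finally, I would pick a linearisation of $\mathcal{C}_\infty$ that respects $\prec$ (taking any linearisation of each $\mathcal{E}_k$ in turn), apply $\lambda$, and obtain an infinite occurrence sequence $\pi$ of $\mathcal{N}_{\neg\psi}$ enabled from $M_0$. Because $\pi$ visits $T_I$ infinitely often, $\pi$ is accepted by $\mathcal{N}_{\neg\psi}$ in the Büchi sense, hence is an illegal infinite-trace.

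The main obstacle is the iterative step: showing that the union $\mathcal{C}_\infty$ really is an (infinite) configuration of $Unf_\mathcal{N}$ and that its linearisation projects to a legitimate occurrence sequence of $\mathcal{N}_{\neg\psi}$. The delicate part is verifying conflict-freeness across successive copies $\mathcal{E}_k$, which requires using that $\theta$ is an isomorphism of branching processes (so it preserves $\prec$ and $\#$) and that $\mathcal{C}_k\oplus\mathcal{E}_k$ is a configuration at every step; together with the marking equality $Mark(\mathcal{C}_k)=Mark(\mathcal{C}_0)$, this inductively precludes any conflict between a new event and the accumulated history.
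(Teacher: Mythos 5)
Your overall strategy is sound and reaches the right conclusion, but it takes a noticeably heavier route than the paper. The paper's proof never builds an infinite configuration: it takes a linearization $\pi_1$ of $[e']$, extends it to a linearization $\pi_1\pi_2$ of $[e]$ (possible because $e'<e$ gives $[e']\subseteq[e]$), observes that $Mark([e'])=Mark([e])$ makes $\pi_2$ a firing sequence from $M_1$ back to $M_1$ containing an $I$-transition by the successful-cutoff condition, and concludes that $\pi_1(\pi_2)^\omega$ is already the desired illegal infinite-trace. This one-step pumping at the marking level sidesteps everything you identify as the ``main obstacle'': no iteration of the adequate-order isomorphism $\theta$, no infinite configuration $\mathcal{C}_\infty$, and no conflict-freeness check across the copies $\mathcal{E}_k$. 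Your construction is essentially the machinery the paper reserves for the converse direction (the bad-configuration argument of Lemma \ref{Lem:Bad}); what your approach buys is a proof that lives entirely inside $Unf_\mathcal{N}$ and makes the repetition structurally explicit, at the cost of the extra verification burden you correctly flag.

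The one place where your plan has a genuine gap rather than just extra work is case (2) of Definition \ref{Def:Cut}, where $\neg(e'<e)$ and $[e']\prec[e]$. There you posit $\mathcal{C}_1=[e']\oplus\theta(\mathcal{E}_0')$ ``for an appropriate $\mathcal{E}_0'$'', but no such $\mathcal{E}_0'$ need exist: Definition \ref{Def:AO}(3) supplies an isomorphism from extensions of $[e']$ to extensions of $[e]$, and in case (2) $[e]$ is not itself an extension of $[e']$ (the events of $[e]\setminus[e']$ need not be causally after, or even conflict-free with, those of $[e']\setminus[e]$), so $[e]\setminus[e']$ is not a suffix of $[e']$ and cannot be fed to $\theta$. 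The standard repair is to strengthen the successful condition in this case to a count comparison, $|[e]\cap T_I|>|[e']\cap T_I|$, and then run your iteration from $[e']$ to obtain configurations with strictly increasing numbers of $I$-events. To be fair, the paper's own proof silently restricts itself to $e'<e$ as well, and the complement of exactly this count comparison appears in Algorithm \ref{Alg:Cutoff}, so you are no worse off than the paper here --- but the ``short argument'' you defer is not short, and as sketched it would fail.
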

\begin{proof}\label{Pro:cutoff1-1} 
Let $e$ be a successful cutoff-\uppercase\expandafter{\romannumeral1}, whose companion is $e'$.
In particular, $e'<e$ implies $[e']\prec[e]$.
Let $\pi_1\pi_2$ be an occurrence sequence of $\mathcal{N}_{\neg\psi}$, such that $\pi_1$ is a linearization of $[e']$ and $\pi_1\pi_2$ is a linearization of $[e]$. Let $M_1$ be the marking reached by $\pi_1$, $M_2$ be the reached by $\pi_1\pi_2$. 
Since $Mark([e'])=Mark([e])$, $M_1=M_2$.
According to Definition 9, $[e']\setminus[e]\cap T_I\neq\emptyset$.
Thus, $\pi_1(\pi_2)^\omega$ is an infinite sequence of $\mathcal{N}_{\neg\psi}$ with infinitely many occurrences of $I$-transitions.
It is an illegal infinite-trace. 
\end{proof}

\begin{definition}\label{Def:Bad}
A configuration $\mathcal{C}$ of the unfolding of $\mathcal{N}_{\neg\psi}$ is bad if it contains at least $S+1$ $I-events$, where $S$ is the number of reachable markings of $\mathcal{N}_{\neg\psi}$.
\end{definition}

\begin{lemma}\label{Lem:Bad}
    $\mathcal{N}_{\neg\psi}$ has an illegal infinite-trace iff $Unf_\mathcal{N}$ contains a bad configuration.
\end{lemma}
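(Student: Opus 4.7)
The plan is to prove the two directions separately, using the standard correspondence between configurations of $Unf_\mathcal{N}$ and occurrence sequences of $\mathcal{N}_{\neg\psi}$ (every linearization of a configuration $\mathcal{C}$, via $\lambda$, yields an occurrence sequence of $\mathcal{N}_{\neg\psi}$ leading to $Mark(\mathcal{C})$, and conversely every finite occurrence sequence is the image of some linearization of a configuration).

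For the forward direction ($\Rightarrow$), I would start from an illegal infinite-trace $\pi$ of $\mathcal{N}_{\neg\psi}$, i.e. an infinite occurrence sequence in which $I$-transitions appear infinitely often. I then truncate $\pi$ after its first $S+1$ occurrences of $I$-transitions to obtain a finite prefix $\pi'$. By the correspondence above, $\pi'$ is the image under $\lambda$ of a linearization of some finite configuration $\mathcal{C}$ of $Unf_\mathcal{N}$. Since $\lambda$ maps $I$-events to $I$-transitions injectively along an occurrence sequence, $\mathcal{C}$ contains at least $S+1$ $I$-events, so $\mathcal{C}$ is bad in the sense of Definition B.1.

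For the backward direction ($\Leftarrow$), I would take a bad configuration $\mathcal{C}$ and pick any linearization $e_1 e_2 \cdots e_n$ of it. Let $\sigma = \lambda(e_1)\cdots\lambda(e_n)$ be the corresponding occurrence sequence, and consider the sequence of markings $M_0, M_1,\ldots,M_n$ reached along $\sigma$. By hypothesis there are indices $0 \le i_1 < i_2 < \cdots < i_{S+1} \le n$ such that each $e_{i_k}$ is an $I$-event. The $S+1$ markings $M_{i_1},\ldots,M_{i_{S+1}}$ take values in a set of size at most $S$, so by the pigeonhole principle there exist $k < l$ with $M_{i_k}=M_{i_l}$. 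Write $\sigma = \sigma_1 \sigma_2 \sigma_3$ where $\sigma_1$ ends at position $i_k$ and $\sigma_1 \sigma_2$ ends at position $i_l$. Then $\sigma_2$ is a non-empty sequence enabled at $M_{i_k}=M_{i_l}$ and containing at least one $I$-transition (namely $\lambda(e_{i_l})$), so $\sigma_1 (\sigma_2)^\omega$ is a well-defined infinite occurrence sequence of $\mathcal{N}_{\neg\psi}$ in which some $I$-transition fires infinitely often. This is an illegal infinite-trace.

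The main obstacle I anticipate is the careful bookkeeping in the backward direction: I need to ensure that the chosen linearization of $\mathcal{C}$ really does visit $S+1$ distinct positions at which an $I$-event has just fired, and that the loop $\sigma_2$ extracted by pigeonhole is indeed enabled cyclically from the repeated marking. The former is immediate since $\mathcal{C}$ has at least $S+1$ $I$-events and every linearization lists each event exactly once; the latter follows because enabledness in a Petri net depends only on the current marking, so $\sigma_2$ being enabled at $M_{i_k}$ and leading back to $M_{i_k}=M_{i_l}$ lets us iterate it indefinitely. Everything else is a direct application of Definition B.1 together with the standard occurrence-sequence/configuration correspondence for unfoldings.
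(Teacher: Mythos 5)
Your forward direction is essentially the paper's: truncate the illegal infinite-trace after $S+1$ occurrences of $I$-transitions and lift that finite prefix to a configuration, which is then bad by Definition~B.1. Your backward direction, however, takes a genuinely different route. The paper first invokes a cited auxiliary fact (from Esparza et al.) that no reachable marking of $\mathcal{N}_{\neg\psi}$ concurrently enables two different $I$-transitions; this forces the $S+1$ $I$-events of a bad configuration to be totally ordered by causality, and pigeonhole is then applied to the markings $Mark([e])$ of their \emph{local configurations}, producing a pair $e'<e$ with $Mark([e'])=Mark([e])$ whose linearizations are pumped. You instead fix one linearization of the bad configuration and apply pigeonhole to the markings reached immediately after each of the $S+1$ $I$-events along that single sequence, then pump the intervening segment. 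Your argument is correct and more elementary, since it needs no claim about $I$-transitions being non-concurrent and no causal ordering of the $I$-events. What the paper's heavier route buys is that its witness is exactly a pair of causally ordered events with equal local-configuration markings, i.e.\ a successful cutoff-\uppercase\expandafter{\romannumeral1}, which is reused verbatim in the next lemma (Lemma~B.3); your repeated-marking positions along a linearization do not directly yield such a pair, so if you continued to Theorem~B.2 you would need a separate argument (or the paper's) at that point.
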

\begin{proof}\label{Pro:Bad} 
($\Rightarrow$) Let $\pi$ be a prefix of an illegal infinite-trace of $\mathcal{N}_{\neg\psi}$, such that $\pi$ contains $S+1$ $I-transitions$. 
There exists a configuration $\mathcal{C}$ such that $\pi$ is a linearization of $\mathcal{C}$. Thus, $\mathcal{C}$ is a bad configuration.

($\Leftarrow$) Let $\mathcal{C}$ be a bad configuration. $\mathcal{C}$ contains at least $S+1$ $I-events$. 
A lemma that no reachable marking of $\mathcal{N}_{\neg\psi}$ concurrently enables two different $I$-transitions has been proved \cite{Esparza2001Implementing}.
These events are causally ordered, and by the pigeonhole principle $\exists e',e\in\mathcal{C}\colon e'<e \wedge Mark([e'])=Mark([e])$.
Let $\pi_1\pi_2$ be an occurrence sequence of $\mathcal{N}_{\neg\psi}$, such that $\pi_1$ is a linearization of $[e']$ and $\pi_1\pi_2$ is a linearization of $[e']$. Let $M_1$ be the marking reached by $\pi_1$, $M_2$ be the reached by $\pi_1\pi_2$. 
Since $Mark([e'])=Mark([e])$, $M_1=M_2$.
Thus, $\pi_1(\pi_2)^\omega$ is an illegal infinite-trace of $\mathcal{N}_{\neg\psi}$ with infinitely many occurrences of $I$-transitions.
\end{proof}

\begin{lemma}\label{Lem:BadCutoff}
    A bad configuration contains at least a successful cutoff-\uppercase\expandafter{\romannumeral1}.
\end{lemma}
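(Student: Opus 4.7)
The plan is to combine a pigeonhole argument on the $I$-events with the auxiliary lemma (already invoked in the proof of Lemma \ref{Lem:Bad}) stating that no reachable marking of $\mathcal{N}_{\neg\psi}$ concurrently enables two distinct $I$-transitions. Let $\mathcal{C}$ be a bad configuration, so by Definition \ref{Def:Bad} it contains at least $S+1$ $I$-events, where $S$ bounds the number of reachable markings of $\mathcal{N}_{\neg\psi}$.

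First, I would consider the family $\{Mark([e]) : e \text{ is an } I\text{-event of } \mathcal{C}\}$. Since this family has at least $S+1$ elements but lives in a set of at most $S$ reachable markings, two distinct $I$-events $e_1, e_2 \in \mathcal{C}$ satisfy $Mark([e_1]) = Mark([e_2])$. Second, I invoke the cited fact that no reachable marking concurrently enables two different $I$-transitions; any two $I$-events of $\mathcal{C}$ are therefore not in concurrent relation, and since $\mathcal{C}$ is conflict-free (by Definition \ref{Def:CoCut}) they must be causally ordered. Without loss of generality $e_1 < e_2$, so $[e_1] \subsetneq [e_2]$ and in particular $e_2 \notin [e_1]$.

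Third, I would check that $e_2$ meets Definition \ref{Def:Cut} of cutoff with companion $e_1$: the marking equality is exactly what the pigeonhole step produced, and $e_1 < e_2$ is case (1) of that definition. For the success condition (\uppercase\expandafter{\romannumeral1}), observe that $e_2$ itself lies in $[e_2] \setminus [e_1]$ and $\lambda(e_2) \in T_I$, hence $\lambda([e_2] \setminus [e_1]) \cap T_I \neq \emptyset$. Therefore $e_2$ is a successful cutoff-\uppercase\expandafter{\romannumeral1} contained in $\mathcal{C}$, proving the lemma.

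The main subtlety will be the causal-ordering step: asserting that any two $I$-events of the same configuration are comparable under $<$ requires both the conflict-freeness clause of Definition \ref{Def:CoCut} and the auxiliary lemma about $I$-transitions not being concurrently enabled; care is needed to state this cleanly and to reference its earlier usage so the argument is self-contained. The remaining details (pigeonhole and direct appeal to Definition \ref{Def:Cut}) are routine once this step is in place.
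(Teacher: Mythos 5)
Your proof is correct and takes essentially the same route as the paper, which obtains the cutoff from the pigeonhole argument on the $S+1$ $I$-events combined with the cited fact that no reachable marking concurrently enables two distinct $I$-transitions (this is exactly the ($\Leftarrow$) direction of Lemma \ref{Lem:Bad}, to which the paper's one-line proof defers). Your write-up is in fact more explicit than the paper's in one useful place: it verifies the success condition (\uppercase\expandafter{\romannumeral1}) by noting that $e_2$ itself is an $I$-event lying in $[e_2]\setminus[e_1]$.
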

\begin{proof}\label{Pro:BadCutoff}
According to Proof \ref{Pro:Bad}($\Leftarrow$), event $e$ is a successful cutoff-\uppercase\expandafter{\romannumeral1}.
\end{proof}

\begin{theorem}\label{The:Cutoff1-2}
	Let $N$ be a PDNet, $\psi$ be an LTL-$_\mathcal{X}$ formula, $\mathcal{N}_{\neg\psi}$ be the Büchi PDNet product, and $\hbar$ be a prefix of $Unf_\mathcal{N}$.
	If $\mathcal{N}_{\neg\psi}$ has an illegal infinite-trace, then $\hbar$ contains a successful cutoff-\uppercase\expandafter{\romannumeral1}.
\end{theorem}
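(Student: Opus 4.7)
The plan is to prove the implication by chaining the three results already established in this appendix. Suppose $\mathcal{N}_{\neg\psi}$ has an illegal infinite-trace. First I invoke the ($\Rightarrow$) direction of Lemma~\ref{Lem:Bad}, which produces a bad configuration $\mathcal{C}$ in $Unf_{\mathcal{N}}$ containing at least $S+1$ $I$-events. Reusing the pigeonhole argument from the proof of Lemma~\ref{Lem:Bad}, the causally ordered $I$-events in $\mathcal{C}$ yield two events $e', e$ with $e'<e$ and $Mark([e'])=Mark([e])$, so $e$ qualifies as a successful cutoff-\uppercase\expandafter{\romannumeral1} event of $Unf_{\mathcal{N}}$ per Definition~\ref{Def:Cut}, since $[e]\setminus[e']$ still contains an $I$-event (namely, the $I$-events strictly between $e'$ and $e$ on the chain).

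The remaining step is to transport such a cutoff event from the possibly infinite $Unf_{\mathcal{N}}$ into the finite prefix $\hbar$ actually produced by the algorithm. Here I would appeal to the completeness statement given just after Theorem~\ref{The:IllTra}: the construction stops precisely when no further event whose predecessors contain no cutoff can be added, so $\hbar$ contains every event $\hat{e}$ such that no proper causal predecessor of $\hat{e}$ is itself a cutoff. Therefore, among the set of candidate cutoff-\uppercase\expandafter{\romannumeral1} events identified above, I would pick one whose local configuration $[\hat{e}]$ is $\prec$-minimal with respect to the adequate order on configurations. By minimality, no cutoff event strictly precedes $\hat{e}$, hence $\hat{e}$ has been extended during the fixed-point loop and belongs to $\hbar$. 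Since $\hat{e}$ inherits the properties $Mark([\hat e'])=Mark([\hat e])$ and $[\hat{e}]\setminus[\hat{e}']\cap T_I\neq\emptyset$ from the original witness $(e',e)$, it is a successful cutoff-\uppercase\expandafter{\romannumeral1} in $\hbar$, as required.

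The main obstacle I expect is arguing cleanly that the $\prec$-minimal choice really survives the pruning. This rests on the three clauses of Definition~\ref{Def:AO}: well-foundedness guarantees that a $\prec$-minimal witness exists; the refinement $\mathcal{C}_1\subset \mathcal{C}_2 \Rightarrow \mathcal{C}_1\prec\mathcal{C}_2$ forbids any strictly smaller companion from being skipped; and preservation under finite extensions rules out the possibility that a $\prec$-smaller pair of matching markings could appear only as a continuation further along the $I$-chain. I also need to cover the alternative clause of Definition~\ref{Def:Cut}, namely $\neg(e'<e)\wedge[e']\prec[e]$, but the same minimality argument handles it uniformly. Once these points are in place, the chain Lemma~\ref{Lem:Bad}~$\to$~Lemma~\ref{Lem:BadCutoff}~$\to$~inclusion of the $\prec$-minimal witness in $\hbar$ completes the proof, and combined with Theorem~\ref{The:cutoff1-1} gives the first bullet of Theorem~\ref{The:IllTra}.
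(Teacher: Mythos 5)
Your proposal follows the same skeleton as the paper's own proof: apply the ($\Rightarrow$) direction of Lemma~\ref{Lem:Bad} to obtain a bad configuration, then Lemma~\ref{Lem:BadCutoff} (the pigeonhole argument on the $S+1$ causally ordered $I$-events) to extract a successful cutoff-\uppercase\expandafter{\romannumeral1}. Where you genuinely diverge is in your second and third paragraphs: the paper stops after chaining the two lemmas and silently identifies the full unfolding $Unf_\mathcal{N}$ with the finite prefix $\hbar$ (its proof asserts that ``$\hbar$ has a bad configuration'' even though Lemma~\ref{Lem:Bad} is stated for $Unf_\mathcal{N}$), whereas you explicitly flag and attack the transport problem --- showing that some successful cutoff-\uppercase\expandafter{\romannumeral1} witness survives the pruning into $\hbar$ --- via a $\prec$-minimal choice of witness and the three clauses of Definition~\ref{Def:AO}. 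That is the genuinely delicate part of any completeness argument for finite prefixes, and your sketch is the standard way to handle it: well-foundedness yields a minimal witness, and minimality together with the stopping rule of the construction places it inside $\hbar$. One residual point you would still need to nail down is that no cutoff event of \emph{any} kind (not merely the candidate cutoff-\uppercase\expandafter{\romannumeral1} events) causally precedes your minimal witness, since the algorithm prunes below every cutoff, successful or not; closing that gap requires the usual replay of the witness above the companion of whatever cutoff blocks it, using preservation of $\prec$ under finite extensions. Net effect: your proposal establishes at least as much as the paper does, and is more explicit about where the real work lies.
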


\begin{proof}\label{Pro:Cutoff1-2}
By Lemma \ref{Lem:BadCutoff}, it suffices to show that if the unfolding of $\mathcal{N}_{\neg\psi}$ contains a bad configuration, then $\hbar$ has a successful cutoff-\uppercase\expandafter{\romannumeral1}. 
Based on Lemma \ref{Lem:Bad}, $\mathcal{N}_{\neg\psi}$ has an illegal infinite-trace, $\hbar$ has a bad configuration. Then, it also has a successful cutoff-\uppercase\expandafter{\romannumeral1}.
\end{proof}

\begin{theorem}\label{The:Cutoff2-1}
	Let $N$ be a PDNet, $\psi$ be an LTL-$_\mathcal{X}$ formula, $\mathcal{N}_{\neg\psi}$ be the Büchi PDNet product, and $\hbar$ be a prefix of $Unf_\mathcal{N}$.
	If $\hbar$ contains a successful cutoff-\uppercase\expandafter{\romannumeral2}, then $\mathcal{N}_{\neg\psi}$ has an illegal livelocks.
\end{theorem}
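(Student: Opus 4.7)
The plan is to mirror the structure of Proof~\ref{Pro:cutoff1-1} for cutoff-\uppercase\expandafter{\romannumeral1}, but to replace the looping argument by the monitor-marking certificate. Given a successful cutoff-\uppercase\expandafter{\romannumeral2} event $e$ with companion $e'$ in $\hbar$, the cut-off definition (Definition~\ref{Def:Cut}) supplies three ingredients: $Mark([e'])=Mark([e])$, the local configuration $[e]$ contains some event $e_f$ with $\lambda(e_f)=t_f$, and $Mark([e'])$ is a monitor marking in the sense of Section~\ref{Sub:MC}.

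First, I would realize $Mark([e'])$ by a finite occurrence sequence of $\mathcal{N}_{\neg\psi}$: picking any linearization of $[e']$ and applying $\lambda$ yields an occurrence sequence $\pi_0$ that reaches $Mark([e'])$ from the initial marking. Next, the monitor-marking property supplies an infinite invisible continuation: by the definition recalled in Section~\ref{Sub:MC}, there is a Büchi place $q$ in $\downarrow Mark([e'])$ together with an infinite sequence $\sigma$ of invisible transitions enabled at $Mark([e'])$ such that every marking $M$ reached along $\sigma$ satisfies $\downarrow M = \downarrow Mark([e'])$, and $\mathcal{A}^{q}_{\neg\psi}$ accepts $(\downarrow Mark([e']))^{\omega}$. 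Concatenating yields an infinite run $\pi := \pi_0\sigma$; I would split $\pi$ as $\pi_a\pi_b$ with $\pi_a$ ending at the last visible transition (necessarily inside $\pi_0$, since $\sigma$ is entirely invisible) and $\pi_b$ the remaining infinite invisible suffix. By construction $\pi_a\pi_b$ meets every clause of the definition of illegal livelock from Section~\ref{Sub:MC}, establishing the theorem.

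The main obstacle is interfacing the algorithmic condition ``$e_f\in[e]$ with $\lambda(e_f)=t_f$'' with the semantic content ``$Mark([e'])$ is a monitor marking.'' Since $t_f$ is introduced by the synchronization algorithm (Algorithm~\ref{Alg:Syn}) as a self-loop over every dead-end place, it preserves the marking of every observable place and is therefore invisible; once its required tokens are available it may be taken indefinitely. The presence of $e_f\in[e]$ witnesses that $t_f$ becomes enabled along some linearization whose resulting marking equals $Mark([e])=Mark([e'])$, which is exactly what certifies the infinite invisible tail $\sigma$ used above. Making this identification rigorous---so that the concrete $e_f$ and the abstract monitor-marking witness pin down the same infinite tail---is the one step that needs real care; the rest of the argument is a direct dualization of Proof~\ref{Pro:cutoff1-1}.
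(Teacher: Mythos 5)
Your argument takes a genuinely different route from the paper's, and it contains a gap at exactly the point you flag at the end. You extract the infinite invisible tail $\sigma$ directly from the statement ``$Mark([e'])$ is a monitor marking,'' reading that definition as already guaranteeing an infinite sequence of invisible transitions enabled at $Mark([e'])$. But in Definition~\ref{Def:Cut} and in Algorithm~\ref{Alg:Cutoff} the monitor-marking condition is only the automaton-side check $\mathcal{A}^{q}_{\neg\psi}\models(\downarrow Mark([e']))^{\omega}$ on the \emph{projected} marking; it says nothing about whether $\mathcal{N}_{\neg\psi}$ can actually fire infinitely many invisible transitions from $Mark([e'])$. If the definition already handed you $\sigma$, the theorem would be immediate and the cutoff event $e$, its companion $e'$, and the condition $Mark([e])=Mark([e'])$ would play no role at all --- which is a strong sign the reading is too generous. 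Your proof indeed never uses the companion condition, and that condition is the heart of the matter.

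The paper's proof supplies the missing infinite tail constructively, by the same pumping device as in the cutoff-\uppercase\expandafter{\romannumeral1} case: it linearizes $[e']\subseteq[e_f]\subseteq[e]$ as $\pi_1\pi_2\pi_3$, observes that $\pi_2\pi_3$ consists only of invisible transitions, and uses the marking equality coming from the cutoff structure to conclude that the marking after $\pi_1\pi_2$ equals the marking after $\pi_1\pi_2\pi_3$, so that $\pi_1\pi_2(\pi_3)^{\omega}$ is a legitimate infinite run of $\mathcal{N}_{\neg\psi}$; the monitor-marking condition is then invoked only to certify Büchi acceptance of the constant projected suffix. To repair your proof you would need to replace the appeal to the definition of monitor marking by this lasso construction: take $\sigma:=(\pi_3)^{\omega}$ where $\pi_3$ linearizes $[e]\setminus[e_f]$, and justify that it is fireable forever because the cut returns to the same marking. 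As written, your argument assumes the conclusion's hardest ingredient.
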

\begin{proof}\label{Pro:Cutoff2-1}
Let $e$ be a successful cutoff-\uppercase\expandafter{\romannumeral2}, whose companion is $e'$.
Thus, $\exists e_f\in [e]\colon\lambda(e_f)=t_f$ and $Mark([e'])$ is a monitor marking. 
Let $\pi_1\pi_2\pi_3$ be an occurrence sequence of $\mathcal{N}_{\neg\psi}$, such that $\pi_1$ is a linearization of $[e']$, $\pi_1\pi_2$ is a linearization of $[e_f]$ and $\pi_1\pi_2\pi_3$ is a linearization of $[e]$. Let $\mathcal{A}_{\neg\psi}$ be the Büchi automaton of $\neg\psi$, and $q_n$ be the Büchi projection under $Mark([e'])$.
Due to $Mark([e'])$ as a monitor marking, $\mathcal{A}_{\neg\psi}^n$ ($\mathcal{A}_{\neg\psi}$ with the initial state $q_n$) can accept such infinite marking projection sequence $(\downarrow Mark([e']))^\omega$.
Thus, $\pi_2\pi_3$ contains only invisible transitions. Let $M_2$ be the marking reached by $\pi_1\pi_2$, $M_3$ be the reached by $\pi_1\pi_2\pi_3$. 
According to Algorithm 1, $t_f$ is added for infinite occurrence sequence but does not change the marking, implying $Mark([e])=Mark([e_f])$. 
Since $Mark([e])=Mark([e_f])$, $M_2=M_3$. 
Thus, $\pi_1\pi_2(\pi_3)^\omega$ is an illegal livelock of $\mathcal{N}_{\neg\psi}$.
\end{proof}

\begin{theorem}\label{The:Cutoff2-2}
	Let $N$ be a PDNet, $\psi$ be an LTL-$_\mathcal{X}$ formula, $\mathcal{N}_{\neg\psi}$ be the Büchi PDNet product, and $\hbar$ be a prefix of $Unf_\mathcal{N}$.
	If $\mathcal{N}_{\neg\psi}$ has an illegal livelocks, then $\hbar$ contains a successful cutoff-\uppercase\expandafter{\romannumeral2}.
\end{theorem}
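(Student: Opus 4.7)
The plan is to mirror the structure of Theorem B.2 for illegal infinite-traces, with the monitor-marking condition and the self-loop transition $t_f$ playing the roles of $I$-transitions and their infinite recurrence. Given an illegal livelock $\pi = \pi_1 \pi_2$ in which $\pi_1$ ends with the last visible transition and reaches a monitor marking $M_n$ while $\pi_2$ is an infinite invisible suffix, the goal is to extract a successful cutoff-\uppercase\expandafter{\romannumeral2} event in the complete prefix $\hbar$.

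I would begin by introducing an auxiliary notion analogous to Definition B.1: call a configuration $\mathcal{C}$ of $Unf_\mathcal{N}$ \textit{livelock-bad} if it has a sub-configuration $[e']$ whose marking is a monitor marking, contains a $t_f$-labelled event, and has strictly more than $S$ further events past $[e']$, where $S$ is the number of reachable markings of $\mathcal{N}_{\neg\psi}$. I would then prove an analogue of Lemma B.1: $\mathcal{N}_{\neg\psi}$ has an illegal livelock iff $Unf_\mathcal{N}$ contains a livelock-bad configuration. The forward direction linearises $\pi_1$ as a configuration $[e']$ realising $M_n$ and uses the infinite invisible suffix $\pi_2$, together with the availability of $t_f$ guaranteed by Algorithm A.1's self-loops on sink places, to extend $[e']$ past the $S$-bound while inserting a $t_f$-labelled event $e_f$. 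A pigeonhole step then yields two nested sub-configurations $\mathcal{C}_1 \subset \mathcal{C}_2$ with $Mark(\mathcal{C}_1) = Mark(\mathcal{C}_2)$, $[e'] \subseteq \mathcal{C}_1$, and $e_f \in \mathcal{C}_1$; letting $e$ be the maximal event of $\mathcal{C}_2$ gives $e' < e$, $Mark([e]) = Mark([e'])$, $e_f \in [e]$, and $Mark([e'])$ a monitor marking, which is precisely the cutoff-\uppercase\expandafter{\romannumeral2} condition of Definition 9. Promoting from $Unf_\mathcal{N}$ to $\hbar$ then follows from the completeness of the prefix under the adequate order $\prec$ of Definition 8, just as in Theorem B.2.

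The main obstacle will be the pigeonhole step: although $\downarrow M$ is invariant along $\pi_2$, the full marking may drift, so matching $Mark([e])$ against $Mark([e'])$ cannot be read off $\pi_2$ directly. I would handle this by placing $e_f$ as early as possible after $e'$ (using that $t_f$ is a self-loop and leaves the marking unchanged, so it remains enabled whenever $[e']$ is reached) and then, along the unfolding continuation of $\pi_2$, applying pigeonhole on the finite set of reachable full markings to find a return to $Mark([e'])$. A secondary subtlety is verifying that a $t_f$-event is actually creatable in this context: this requires that at $M_n$, or somewhere reachable from it by invisible transitions, all sink places of $\mathcal{N}_{\neg\psi}$ carry tokens, which should follow from how $P_B$, $P_S$, and $T_\mathcal{A}$ are wired in Algorithm A.1 combined with the livelock assumption that no visible transition remains to be scheduled after $\pi_1$.
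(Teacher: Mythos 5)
Your proposal is correct in substance and follows the same skeleton as the paper's proof: locate an event $e'$ realising the monitor marking, a $t_f$-labelled event $e_f$, and an event $e$ with $e_f\in[e]$, $e'<e$ and $Mark([e])=Mark([e'])$, then invoke Definition~\ref{Def:Cut}. The difference is one of rigour rather than route. The paper simply writes the illegal livelock as $\pi_1\pi_2(\pi_3)^\omega$, i.e.\ it assumes from the outset that the infinite invisible suffix is a lasso whose loop $\pi_3$ returns to the full marking reached after $\pi_1\pi_2$, and then reads the cutoff off that decomposition; your ``livelock-bad configuration'' notion plus the pigeonhole step makes this lasso extraction explicit, which is exactly the point you correctly flag as delicate (only the projection $\downarrow M$ is invariant along invisible transitions, so the return of the \emph{full} marking must be argued from the finiteness of the reachable marking set, not assumed). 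What your extra machinery buys is a proof that does not presuppose the lasso shape and that mirrors the Lemma~\ref{Lem:Bad}/Lemma~\ref{Lem:BadCutoff} architecture of the infinite-trace case; what the paper's version buys is brevity. One subtlety remains unaddressed on both sides: taking $e$ to be ``the maximal event'' of the larger configuration and concluding $Mark([e])=Mark(\mathcal{C}_2)$ requires that the configuration produced by the pigeonhole argument actually be a local configuration $[e]$ (the paper likewise asserts without comment that $\pi_1$, $\pi_1\pi_2$ and $\pi_1\pi_2\pi_3$ are linearizations of local configurations); this rests on the scheduler places $P_B,P_S$ serializing the relevant events into a causal chain and deserves a sentence in either write-up.
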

\begin{proof}\label{Pro:Pro:Cutoff2-2}
Let $\pi_1\pi_2(\pi_3)^\omega$ be an illegal livelocks of $\mathcal{N}_{\neg\psi}$ such that $\pi_2\pi_3$ contains no visible transitions. 
Based on Algorithm 1, there must exist an event $e_f$ satisfying $\lambda(e_f)=t_f$.
Due to the feature of $t_f$, let $\pi_1\pi_2$ be a linearization of $[e_f]$. 
Let $e$ be an event such that $\pi_1\pi_2\pi_3$ is a linearization of $[e]$ and $e_f\in [e]$.
Then, let $e'$ be the companion event of $e$ such that $\pi_1$ is a linearization of $[e']$. 
Because $\pi_2\pi_3$ contains only invisible transitions, $Mark([e'])$ is a monitor marking.
It implies that $Mark([e'])=Mark([e])$. 
That is, $e$ is a successful cutoff-\uppercase\expandafter{\romannumeral2}.
Thus, $\hbar$ has a successful cutoff-\uppercase\expandafter{\romannumeral2}. 
\end{proof}

\section{Successor Relation}\label{App:Succ}

\setcounter{table}{0}   
\setcounter{figure}{0}
\setcounter{equation}{0}
\setcounter{algorithm}{0}
\setcounter{definition}{0}
\setcounter{theorem}{0}
\setcounter{remark}{0}

\renewcommand{\thetable}{C.\arabic{table}}
\renewcommand{\thefigure}{C.\arabic{figure}}
\renewcommand{\theequation}{C.\arabic{equation}}
\renewcommand{\thealgorithm}{C.\arabic{algorithm}}
\renewcommand{\thedefinition}{C.\arabic{definition}}
\renewcommand{\thetheorem}{C.\arabic{theorem}}
\renewcommand{\theremark}{C.\arabic{remark}}

Let $s::=\langle h, m, r, u\rangle$ be a state of $\mathcal{P}$, where $h$ is a function that indicates the current program location of every thread, $m$ is the current memory state, $r$ is a function that maps every mutex to a thread identifier, and $u$ is also a function maps every condition variable to a multiset of thread identifiers of those threads that currently wait on that condition variable.
The successor relation in Table \ref{Tab:Suc} is represented by $s$$\stackrel{\langle\tau,\beta\rangle}{\longrightarrow}$$s'$, updating $s$ to a new state $s'$.

\begin{table}[ht]
\caption{Successor relation on actions.}\label{Tab:Suc}
	\begin{center}
		\begin{tabular}{l}
			\toprule[1pt]
			
			$\frac{\tau := \langle i, q, l, l', m, m'\rangle\in\mathcal{T}\emph{ }q :=\nu:=w\emph{ }h(i)=l}
			{\langle h, m, r, u\rangle
				\stackrel{\langle \tau, \langle ass, l'\rangle\rangle}{\longrightarrow}
				\langle h[i\mapsto l'], m', r, u\rangle}$  ($\rho_{ass}$) \\
			
			
			
			$\frac{\tau := \langle i, q, l, l', m, m'\rangle\in\mathcal{T}\emph{ }q := if(w)or while(w)\emph{ }[\![w]\!]m=true\emph{ }h(i)=l}
			{\langle h, m, r, u\rangle
				\stackrel{\langle \tau, \langle tcd, l'\rangle\rangle}{\longrightarrow}
				\langle h[i\mapsto l'], m, r, u\rangle}$ ($\rho_{tcd}$) \\

			$\frac{\tau := \langle i, q, l, l', m, m'\rangle\in\mathcal{T}\emph{ }q := if(w) or while(w) \emph{ } [\![w]\!]m=false \emph{ }h(i)=l}
			{\langle h, m, r, u\rangle
				\stackrel{\langle \tau, \langle fcd, l'\rangle\rangle}{\longrightarrow}
				\langle h[i\mapsto l'], m, r, u\rangle}$ ($\rho_{fcd}$) \\


			
			$\frac{\tau := \langle i, q, l, l', m, m'\rangle\in\mathcal{T}\emph{ }q := \langle lock, \ell\rangle\emph{ }h(i)=l\emph{ }r(\ell)=0}
			{\langle h, m, r, u\rangle
				\stackrel{\langle \tau, \langle acq, \ell\rangle\rangle}{\longrightarrow}
				\langle h[i\mapsto l'], m, r[\ell\mapsto i], u\rangle}$ ($\rho_{acq}$) \\
			
			$\frac{\tau := \langle i, q, l, l', m, m'\rangle\in\mathcal{T}\emph{ }q := \langle unlock, \ell\rangle\emph{ }h(i)=l\emph{ }r(\ell)=i}
			{\langle h, m, r, u\rangle
				\stackrel{\langle \tau, \langle rel, \ell\rangle\rangle}{\longrightarrow}
				\langle h[i\mapsto l'], m, r[\ell\mapsto 0], u\rangle}$ ($\rho_{rel}$) \\
			
			$\frac{\tau := \langle i, q, l, l', m, m'\rangle\in\mathcal{T}\emph{ }q := \langle signal, \gamma\rangle\emph{ }h(i)=l\emph{ }\{j\}\in u(\gamma)}
			{\langle h, m, r, u\rangle
				\stackrel{\langle \tau, \langle sig, \gamma\rangle\rangle}{\longrightarrow}
				\langle h[i\mapsto l'], m, r, u[\gamma\mapsto u(\gamma)\setminus\{j\}\cup\{-j\}]\rangle}$ ($\rho_{sig}$) \\
			
			$\frac{\tau := \langle i, q, l, l', m, m'\rangle\in\mathcal{T}\emph{ }\emph{ }q := \langle wait, \gamma, \ell\rangle\emph{ }h(i)=l\emph{ }r(\ell)=i\emph{ }\{i\}\notin u(\gamma)}
			{\langle h, m, r, u\rangle
				\stackrel{\langle \tau, \langle wa_1, \gamma, \ell\rangle\rangle}{\longrightarrow}
				\langle h, m, r[\ell\mapsto 0], u[\gamma\mapsto u(\gamma)\cup\{i\}]\rangle}$ ($\rho_{wa_1}$) \\
			
			$\frac{\tau := \langle i, q, l, l', m, m'\rangle\in\mathcal{T}\emph{ }q := \langle wait, \gamma, \ell\rangle\emph{ }h(i)=l\emph{ }r(\ell)=0\emph{ }\{-i\}\in u(\gamma)}
			{\langle h, m, r, u\rangle
				\stackrel{\langle\tau,\langle wa_2,\gamma, \ell\rangle\rangle}{\longrightarrow}
				\langle h, m, r, u[\gamma\mapsto u(\gamma)\setminus\{-i\}]\rangle}$ ($\rho_{wa_2}$)
			\\
			
			$\frac{\tau := \langle i, q, l, l', m, m'\rangle\in\mathcal{T}\emph{ }q := \langle wait,\gamma,\ell\rangle\emph{ }h(i)=l\emph{ }r(\ell)=0}
			{\langle h, m, r, u\rangle
				\stackrel{\langle\tau,\langle wa_3,\gamma, \ell\rangle\rangle}{\longrightarrow}
				\langle h[i\mapsto l'], m, r[\ell\mapsto i], u\rangle}$($\rho_{wa_3}$) \\
			\bottomrule[1pt]
		\end{tabular}
	\end{center}
\end{table}

For $\rho_{ass}$, $\nu$$:=$$w$ is an assignment.
$[\![w]\!]m$ denotes that the value evaluating by the expression $w$ under the memory state $m$. And this value is assigned to the variable $\nu$.
Thus, $m'$$=$$m[\nu$$\mapsto$$[\![w]\!]m]$ denotes the new memory state where $m'(\nu)$$=$$[\![w]\!]m$ and $m'(y)$$=$$m(y)$ ($\forall y$$\in$$\mathcal{V}$$:y$$\neq$$\nu$) in the new state $s'$ for $\rho_{ass}$.
$\rho_{tcd}$ represents the boolean condition $w$ is evaluated by $true$ (i.e., $[\![w]\!]m$$=$$true$), and $\rho_{fcd}$ represents the boolean condition $w$ is evaluated by $flase$ (i.e., $[\![w]\!]m$$=$$flase$).
Neither $\rho_{tcd}$ nor $\rho_{fcd}$ updates the memory state. And they update the program locations to different program locations $l'$.

$\rho_{acq}$ represents thread $i$ obtains this mutex $\ell$ ($r[\ell$$\mapsto$$i]$) and updates the program location to $l'$ if $\ell$ is not held by any thread ($r(\ell)$$=$$0$).
However, if $\ell$ is held by another thread, thread $i$ is blocked by $\ell$, and current state $s$ cannot be updated.
$\rho_{rel}$ corresponds to $\langle unlock, \ell\rangle$.
Here, $r(\ell)$$=$$i$ means the mutex $\ell$ is held by thread $i$. If $r(\ell)$$=$$i$, thread $i$ releases this mutex $\ell$ by $r[\ell$$\mapsto$$0]$ and update the program location to $l'$.
$\rho_{sig}$ corresponds to $\langle signal, \gamma\rangle$. Thread $i$ can notify a thread $j$ belonging to $u(\gamma)$ ($\{j\}$$\in$$u(\gamma)$). Thus, thread $j$ will be notified by thread $i$ ($u[\gamma$$\mapsto$$u(\gamma)$$\setminus$$\{j\}$$\cup$$\{-j\}]$). And it updates the program location to $l'$. 
If the mutex $\ell$ is held by thread $i$ ($r(\ell)$$=$$i$) and thread $i$ is not waiting for $\gamma$ currently ($\{i\}$$\notin$$u(\gamma)$), $\rho_{wa_1}$ represents the action releases the mutex $\ell$ ($r[\ell$$\mapsto$$0]$), and thread $i$ is added to the current thread multiset waiting on condition variable $\gamma$ ($u[\gamma$$\mapsto$$u(\gamma)$$\cup$$\{i\}]$).
Then, $\rho_{}wa_2$ represents the thread $i$ is blocked until a thread $j$ ($\{-i\}$$\in$$u(\gamma)$) notifies by condition variable $\gamma$ . Thus, thread $i$ no long waits for a notification on $\gamma$ ($u[\gamma$$\mapsto$$u(\gamma)$$\setminus$$\{-i\}]$).
Finally, if the thread $i$ has been woken up by the other thread and $\ell$ is not held ($r(\ell)=0$), $\rho_{wa_3}$ represents the action acquires the mutex $\ell$ again ($r[\ell$$\mapsto$$i]$) and updates the program location to $i$.

\end{subappendices}
\end{document}